\documentclass[12pt,a4paper]{amsart}
\usepackage{amsmath,amssymb,amsthm,booktabs,comment,longtable}
\usepackage{enumitem}
\usepackage{lscape}
\usepackage{graphicx}
\usepackage[all]{xy}
\usepackage{ccaption}
\usepackage{stmaryrd}
\usepackage{hyperref}
\usepackage{ascmac}
\usepackage{datetime2}
\usepackage{fancyhdr} 
\usepackage{mcommand}
\usepackage{xcolor}
\newtheorem{thm}{Theorem}[section]
\newtheorem{prop}[thm]{Proposition}
\newtheorem{lem}[thm]{Lemma}
\newtheorem{cor}[thm]{Corollary}
\newtheorem{dfn}[thm]{Definition}
\newtheorem{notation}[thm]{Notation}
\newtheorem{note}[thm]{Note}
\newtheorem{prob}[thm]{Problem}
\newtheorem{fact}[thm]{Fact}
\newtheorem{setting}[thm]{Setting}
\newtheorem{ex}[thm]{Example}
\newtheorem{rem}[thm]{Remark}

\newtheorem{conj}[thm]{Conjecture}

\newcommand{\A}{\forall}

\newcommand{\conv}{\operatorname{conv}}

\newcommand{\mcB}{{\mathcal B}}

\newcommand{\mcH}{{\mathcal H}}

\newcommand{\ep}{\varepsilon}

\begin{document}
\title[joint measurability of coplanar POVMs]{Joint measurability of coplanar POVMs}

\subjclass[2020]{Primary 81P15; Secondary 81P16}
\keywords{joint measurability, compatibility, positive operator-valued measure}

\author[T.~Yoshino]{Taro Yoshino}
\address{Graduate School of Mathematical Sciences, The University of Tokyo\\ 3-8-1 Komaba, Meguro-ku, Tokyo 153-8914, Japan}
\email{yoshino@ms.u-tokyo.ac.jp}

\author[K.~Tojo]{Koichi Tojo}
\address{Department of Mathematical Sciences, Tokai University, 4-1-1 Kitakaname, Hiratsuka-shi, Kanagawa 259-1292, Japan}
\email{koichi.tojo@tokai.ac.jp}

\author[K.~Hagihara]{Kei Hagihara}
\address{Department of Mathematics, Faculty of Science and Technology, Keio University, 3-14-1 Hiyoshi, Kouhoku-ku, Yokohama 223-8522, Japan}
\email{kei.hagihara@gmail.com} 

\author[A.~Tanaka]{Akinori Tanaka}
\address{Mathematical Science Team, RIKEN Center for Advanced Intelligence Project (AIP), Nihonbashi 1-chome Mitsui Building, 15th floor, 1-4-1 Nihonbashi, Chuo-ku, Tokyo 103-0027, Japan}
\email{akinori.tanaka@riken.jp} 

\author[T.~Tomita]{Takuki Tomita}
\address{Mathematical Science Team, RIKEN Center for Advanced Intelligence Project (AIP), Nihonbashi 1-chome Mitsui Building, 15th floor, 1-4-1 Nihonbashi, Chuo-ku, Tokyo 103-0027, Japan}
\email{takuki.tomita@riken.jp}

\maketitle
\begin{abstract}
An unbiased qubit positive operator-valued measure (POVM) can be uniquely expressed in terms of the Bloch vector.
We prove that for POVMs with coplanar Bloch vectors, 
they are jointly measurable if and only if 
the perimeter of the convex closure of Bloch vectors is less than or equal to 4. 
Moreover, as a corollary, we give an affirmative answer to the conjecture suggested by [Andrejic--Kunjwal 2020] on the existence of a joint device for POVMs whose Bloch vectors lie on the same circle. 
\end{abstract}

\tableofcontents

\section{Introduction and main theorem}\label{section:introduction}
In quantum information theory, measurement compatibility is one of the central problems. 
In particular, it has been extensively studied for the case of binary qubit positive operator-valued measures (POVMs). 
An unbiased binary qubit POVM can be uniquely expressed in terms of its Bloch vector. 
Grinko and Uola \cite{GU25} characterized the compatibility of unbiased binary qubit POVMs as an optimization problem in terms of their corresponding Bloch vectors. 
Furthermore, using this characterization, they provided a counterexample to Conjecture~3 in \cite{AK20}. 
However, Conjecture~2 in the same paper \cite{AK20}, which concerns the case where the Bloch vectors lie on a common circle within a single plane, has remained open. 
In this paper, we affirmatively resolve this Conjecture~2.

\begin{setting}
Let $\mcH:=\C^2$, $\Omega:=\{\pm1\}$. 
Let $\mcB_+(\mcH)$ be the set of positive semi-definite operators on $\mcH$ and $\sigma=(\sigma_1,\sigma_2,\sigma_3)$ Pauli matrices. 
Any unbiased binary qubit POVM is uniquely determined by a vector $a\in \R^3$ with $\|a\|\le 1$ via the relation 
\[ E(\ep)=\frac12(I+\ep a\cdot \sigma) \qquad (\ep\in\Omega). \]
Here, $a\cdot \sigma:=a_1\sigma_1+a_2\sigma_2+a_3\sigma_3$. 
The vector $a$ is called the \textbf{Bloch vector} of $E$. 
Through this relation, we naturally identify an unbiased binary qubit POVM $E \colon \Omega \to \mcB_+(\mcH)$ with its Bloch vector $a \in \R^3$ ($\|a\| \leq 1$).
\end{setting}

\begin{dfn}
Let $E_1,\dots,E_n$ be POVMs on $\Omega$. 
A POVM $J\colon \Omega^n\to \mcB_+(\mcH)$ is  a \textbf{joint device} of $E_1,\dots,E_n$ if 
\[ \sum_{\omega\in \Omega^n\;:\; \omega_i=\varepsilon} J(\omega) = E_i(\varepsilon)\qquad (\A \varepsilon\in \Omega,\quad \A i\in\{1,\dots,n\}). \]
We say that $E_1,\dots,E_n$ are \textbf{jointly measurable} if they admit a joint device. 
\end{dfn}

\begin{thm}\label{thm:main} 
For coplanar unbiased binary qubit POVMs $a_1,\dots,a_n\in\R^2\subset\R^3$, the following conditions are equivalent. 
\begin{enumerate}
\item[\rm(i)] $a_1,\dots,a_n$ are jointly measurable. 
\item[\rm(ii)] The perimeter of the convex polygon $\conv\{ \pm a_i\ |\ i\in \{1,\dots,n\}\}$ is less than or equal to $4$. 
\end{enumerate}
Here, $\conv\{ \pm a_i\ |\ i\in \{1,\dots,n\}\}$ denotes the convex closure of the set $\{ \pm a_i\ |\ i\in \{1,\dots,n\}\}$. 
\end{thm}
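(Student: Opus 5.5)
Throughout, write $\omega=(\omega_1,\dots,\omega_n)\in\Omega^n$. The plan is to reduce joint measurability to a finite-dimensional feasibility problem on $\Omega^n$ and then solve that problem geometrically in the plane.

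\textbf{Reduction.} By symmetrizing a joint device $J$ under $\omega\mapsto-\omega$, we may assume $J(\omega)=\tfrac12\bigl(p(\omega)I+v(\omega)\cdot\sigma\bigr)$ with $p(-\omega)=p(\omega)\ge0$ and $v(-\omega)=-v(\omega)$. Positivity of $J(\omega)$ is then $\|v(\omega)\|\le p(\omega)$. The marginal conditions read
\[
\sum_{\omega_i=\ep}p(\omega)=1,\qquad \sum_\omega \omega_i v(\omega)=2a_i .
\]
Since $p$ is even, the first condition is equivalent to $\sum_\omega p(\omega)=2$. Projecting $v$ onto the plane spanned by the $a_i$ preserves both the marginals and $\|v\|\le p$, so we may take $v(\omega)\in\R^2$. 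With the best choice $p=\|v\|$, joint measurability is therefore equivalent to the existence of an odd map $v\colon\Omega^n\to\R^2$ satisfying
\[
\sum_\omega\|v(\omega)\|\le 2,\qquad \sum_\omega\omega_i v(\omega)=2a_i\quad(\A i).
\]
By duality, this is equivalent to
\[
\sum_i\langle b_i,a_i\rangle\le \max_{\omega}\Bigl\|\sum_i\omega_i b_i\Bigr\|\qquad \text{for all } b_1,\dots,b_n\in\R^2,
\]
which may be compared with the formulation of Grinko--Uola.

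\textbf{Geometric input.} Order the $\pm a_i$ by angle so that they form a centrally symmetric sequence $c_1,\dots,c_{2n}$ with $c_{k+n}=-c_k$. Two facts about the polygon $P=\conv\{\pm a_i\}$ will be used.
\begin{itemize}
\item[(a)] The perimeter of $P$ equals $\max_\pi\sum_k\|c_{\pi(k+1)}-c_{\pi(k)}\|$ over symmetric cyclic orderings $\pi$. The convex ordering is the shortest closed tour, so the relevant quantity is the convex perimeter $L=\sum_k\|c_{k+1}-c_k\|$ in angular order. If some $a_i$ lie in the interior of $P$, they only shorten the corresponding edge terms and can be handled by monotonicity.
\item[(b)] The sign vectors $\omega$ realized by half-planes through the origin are exactly the $2n$ ``cyclic'' sign patterns: $\omega^{(k)}$ assigns $+$ to $c_k,\dots,c_{k+n-1}$.
\end{itemize}

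\textbf{(ii)$\Rightarrow$(i).} Support $v$ only on the $2n$ cyclic patterns and set $v(\omega^{(k)})=\tfrac12\, R(c_{k}-c_{k-1})$ up to orientation, where $R$ is rotation by $90^\circ$; the exact form is fixed by a discrete summation by parts. Then $\sum_\omega\|v\|=L/2$, and each marginal $\sum_\omega\omega_iv(\omega)$ telescopes to a multiple of $a_i$. Adjusting the normalization, possibly as $v(\omega^{(k)})=J(c_k-c_{k-1})/2$ with $J$ the rotation, gives $\sum\|v\|=L/2\le 2$ precisely when $L\le4$. Any surplus mass is put into $p$ while keeping $\|v\|\le p$. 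I would verify the telescoping identity directly: it amounts to the statement that the edge vectors of a symmetric polygon, rotated and weighted by signs, reproduce the vertices.

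\textbf{(i)$\Rightarrow$(ii).} In the dual inequality, choose $b_i$ as $R$ applied to the edge differences adjacent to $a_i$, or more precisely $b_i=\lambda_i R(\text{difference of neighbors})$. Then $\sum\langle b_i,a_i\rangle$ equals twice the area-type sum, while $\max_\omega\|\sum\omega_ib_i\|$ is attained at a cyclic pattern and telescopes to a chord. I expect this to be the main obstacle: choosing the test vectors $b_i$ so that the left side equals $L$ and the right side equals $4$ exactly. An alternative is a direct argument. Given a feasible $v$, for each edge direction consider the unit normal $u_k$ of $P$ and use $\|v(\omega)\|\ge\langle v(\omega),u\rangle$ with $u$ chosen depending on $\omega$ (the Cauchy--Crofton idea). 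Averaging over directions $\theta$, the perimeter formula $L=\int_0^{\pi}\text{width}_P(\theta)\,d\theta$ together with
\[
\text{width}_P(\theta)=2\max_i|\langle a_i,u_\theta\rangle|\le \sum_\omega|\langle v(\omega),u_\theta\rangle|\quad(\text{from the marginals})
\]
and $\int_0^\pi|\langle v,u_\theta\rangle|\,d\theta=2\|v\|$ should give $L\le 2\sum_\omega\|v\|\le4$. I would try this Crofton argument first, since it avoids an explicit choice of dual certificate. The key check is the middle inequality, which follows from $2\langle a_i,u\rangle=\sum_\omega\omega_i\langle v(\omega),u\rangle\le\sum_\omega|\langle v(\omega),u\rangle|$.
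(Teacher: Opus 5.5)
Your construction for (ii)$\Rightarrow$(i) fails as written, for two concrete reasons.

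First, the $90^\circ$ rotation is wrong. Since $R$ is linear, $\sum_k\omega^{(k)}_i\,\tfrac12R(c_k-c_{k-1})=R\bigl(\sum_k\omega^{(k)}_i\,\tfrac12(c_k-c_{k-1})\bigr)=2Ra_i\neq 2a_i$, and there is no normalization left to adjust. The correct choice is the unrotated $v(\omega^{(k)})=\tfrac12(c_k-c_{k-1})$. If $a_i=c_m$, then $\omega^{(k)}_i=+1$ exactly for $k=m-n+1,\dots,m$ (indices mod $2n$). The signed sum then telescopes to $\tfrac12(c_m-c_{m-n})-\tfrac12(c_{m+n}-c_m)=2c_m$, because $c_{m\pm n}=-c_m$. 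This is exactly the paper's $X=\tfrac12K\,{}^tP$, i.e.\ $x(\beta_{[i]-1})=\tfrac12(a_i-a_{i+1})$.

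Second, your treatment of non-extreme $a_i$ is backwards. By the triangle inequality, inserting a non-extreme point into the angular tour can only lengthen it. So running the construction on all $\pm a_i$ gives $\sum_\omega\|v(\omega)\|=\tfrac12(\text{tour length})\ge\tfrac12\mathrm{Per}(P)$, which can exceed $2$ even when $\mathrm{Per}(P)\le 4$. You must build $v$ on the vertices of $P$ only. Then adjoin each remaining $a_{n+1}=\sum_i \lambda_ia_i$ (with $\sum_i|\lambda_i|\le1$) by the post-processing $v'(\omega,\pm1)=\tfrac{1\pm u_\omega}{2}\,v(\omega)$, where $u_\omega=\sum_i\omega_i\lambda_i$. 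Since $|u_\omega|\le1$ and $u_{-\omega}=-u_\omega$, this keeps $v'$ odd, leaves $\sum\|v\|$ and the old marginals unchanged, and gives the new marginal $\sum_\omega u_\omega v(\omega)=2a_{n+1}$. This is Claim~(iii) in the proof of Lemma~\ref{lem:conv}. (Your item (a) should also read $\min$, not $\max$.) With these repairs your (ii)$\Rightarrow$(i) is the paper's argument.

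Your (i)$\Rightarrow$(ii), by contrast, is correct and genuinely different from the paper's. From any joint device, $\sum_\omega\|v(\omega)\|\le\sum_\omega p(\omega)=2$, and the marginals give $w_P(\theta)=2\max_i|\langle a_i,u_\theta\rangle|\le\sum_\omega|\langle v(\omega),u_\theta\rangle|$. Integrating over $[0,\pi]$ with Cauchy's formula and $\int_0^\pi|\langle v,u_\theta\rangle|\,d\theta=2\|v\|$ yields $\mathrm{Per}(P)\le 2\sum_\omega\|v(\omega)\|\le4$. This holds for arbitrary $a_i$, with no reduction to the well-arranged case.

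The paper instead proves this lower bound with an explicit optimal dual certificate. In your dual formulation it is $b_i=\tfrac12(\hat k_i-\hat k_{i-1})$ with $\hat k_0:=-\hat k_n$, for which $\sum_i\langle b_i,a_i\rangle=\tfrac14\mathrm{Per}(P)$. The step you flagged as the main obstacle, $\max_\omega\|\sum_i\omega_ib_i\|\le1$, is precisely Proposition~\ref{prop:norm_ineq} on alternating sums of unit vectors in a half-plane, which the paper proves by induction with the nested regions $D(s,t)$. Your Crofton argument bypasses that inequality entirely. The paper's route, in exchange, stays inside its general KKT framework and exhibits an explicit optimal primal--dual pair.
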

The theorem follows directly from Lemma~\ref{lem:existence_of_joint_device_and_the_value_of_L} in Section~\ref{section:function_L_n} and Proposition~\ref{prop:2dim} in Section \ref{section:2-dim_case}.

\begin{rem}
The condition $\|a+b\|+\|a-b\|\le 2$ for two unbiased binary qubit POVMs $a,b\in \R^3$ to be jointly measurable (\cite{B86}) is a special case of Theorem~\ref{thm:main}. 
\end{rem}

Although the implication (ii)$\Rightarrow$(i) of Conjecture~\ref{conj:ak} is proved by [AK20], the converse implication has been open.

\begin{conj}[{\cite[Conjecture~2]{AK20}}]\label{conj:ak}
Let $0=\alpha_0 < \alpha_1 <\alpha_2 < \dots < \alpha_{n-1} < \pi $ and $\eta\in(0,1]$. 
We define $a_i\in\R^3$ by 
\[ a_i=\eta \begin{pmatrix}\cos \alpha_{i-1}\\ \sin \alpha_{i-1}\\ 0\end{pmatrix} \in\R^3 \]
Then the following conditions are equivalent. 
\begin{enumerate}
\item[\rm(i)] $a_1,\dots,a_n$ are jointly measurable. 
\item[\rm(ii)] The following inequality holds.  
\[ \eta \le \frac 1 { \sum_{i=1}^{n-1}\sin \frac{\alpha_i-\alpha_{i-1}}2 + \cos\frac{\alpha_{n-1}}2}. \]
\end{enumerate}
\end{conj}

\begin{prop}\label{solution_to_conj_in_AK20}
Conjecture~\ref{conj:ak} is true. 
\end{prop}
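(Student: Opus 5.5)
Conjecture~\ref{conj:ak} follows from Theorem~\ref{thm:main} once we show that its condition (ii) is exactly the statement that the perimeter of $P:=\conv\{\pm a_i \mid i\in\{1,\dots,n\}\}$ is at most $4$. Both conditions (i) coincide verbatim, and the vectors $a_i$ lie in the plane $x_3=0$, which we identify with $\R^2$. So the whole task is to compute the perimeter of $P$ as a function of $\eta$ and the angles $\alpha_i$.

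The points $\pm a_i$ are $2n$ points on the circle of radius $\eta$, at angles
\[ \alpha_0<\alpha_1<\dots<\alpha_{n-1}<\alpha_0+\pi<\alpha_1+\pi<\dots<\alpha_{n-1}+\pi<2\pi. \]
All of them are distinct, because $\alpha_{n-1}<\pi$. Since they lie on a circle, each is an extreme point of $P$. Hence $P$ is the convex $2n$-gon whose vertices are these points in this cyclic order. A chord subtending the central angle $\theta\in(0,\pi]$ has length $2\eta\sin(\theta/2)$. Consecutive vertices $a_{i}$ and $a_{i+1}$, for $i=1,\dots,n-1$, subtend the angle $\alpha_i-\alpha_{i-1}$. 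The vertices $a_n$ and $-a_1$ subtend $\pi-\alpha_{n-1}$, which lies in $(0,\pi]$, and that edge has length $2\eta\sin\frac{\pi-\alpha_{n-1}}{2}=2\eta\cos\frac{\alpha_{n-1}}2$. The other half of the polygon is the image of this half under $x\mapsto -x$, so the edge from $-a_n$ back to $a_1$ has the same length and the other edges repeat. Therefore
\[ \operatorname{perimeter}(P)=4\eta\Bigl(\sum_{i=1}^{n-1}\sin\frac{\alpha_i-\alpha_{i-1}}2+\cos\frac{\alpha_{n-1}}2\Bigr). \]

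The bracketed quantity is strictly positive: if $n\ge2$, every sine term is positive because $0<\alpha_i-\alpha_{i-1}<\pi$, and the cosine term is nonnegative. If $n=1$, the bracket equals $\cos 0=1$. So the condition $\operatorname{perimeter}(P)\le 4$ is equivalent to the inequality in (ii) of Conjecture~\ref{conj:ak}, and Theorem~\ref{thm:main} gives the equivalence of (i) and (ii). The degenerate case $n=1$ is consistent: the condition becomes $\eta\le1$, which always holds.

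There is no real obstacle here. The only points needing care are the vertex ordering, in particular that the edge joining $a_n$ to $-a_1$ has central angle $\pi-\alpha_{n-1}$ and not its complement, and the positivity of the denominator. All the substantive difficulty lies in Theorem~\ref{thm:main}, which we are allowed to assume.
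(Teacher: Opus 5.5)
Your proposal is correct and follows the paper's route. Like the paper, you compute the perimeter of $\conv\{\pm a_i\}$ as $4\eta\bigl(\sum_{i=1}^{n-1}\sin\frac{\alpha_i-\alpha_{i-1}}2+\cos\frac{\alpha_{n-1}}2\bigr)$ and invoke Theorem~\ref{thm:main}, and you also supply the details the paper leaves implicit: the cyclic vertex order and the edge from $a_n$ to $-a_1$, positivity of the denominator so the inequality can be rearranged, and the degenerate case $n=1$.
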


\begin{rem}
In \cite{AK20}, the truth of Conjecture 2 implies the truth of Conjecture 1. 
Therefore, Conjecture 1 in \cite{AK20} is also true. 
\end{rem}

\begin{proof}[Proof of Proposition~\ref{solution_to_conj_in_AK20}]
The perimeter of the convex polygon $\conv\{\pm a_i \mid i\in \{1,\cdots, n\}\}$ is $4\eta (\sum_{i=1}^{n-1}\sin \frac{\alpha_i-\alpha_{i-1}}{2}+\cos \frac{\alpha_{n-1}}{2})$. 
Thus by Theorem~\ref{thm:main}, the inequality in (ii) holds. 
\end{proof}

\section{Preliminary}
First, we consider the general case without assuming $a_i \in \R^2$. 
That is, for $n \in \Z_{>0}$, we investigate the conditions for the existence of a joint device for $n$ devices $a_1, \dots, a_n \in \R^3$. 
The discussion proceeds as follows.

\par\medskip\noindent\underline{\bf Step~1.} Define a function $L_n(a_1,\dots,a_n)$ and show that the existence of a joint device is equivalent to the condition $L_n \le 1$. 

\par\medskip\noindent\underline{\bf Step~2.} We prove that the value of the function $L_n(a_1,\dots,a_n)$ depends only on the convex closure $\conv\{\pm a_i\ |\ i\in\{1,\dots,n\}\}$. 

\par\medskip\noindent\underline{\bf Step~3.} 
We verify that the value of the function $L_n(a_1,\dots,a_n)$ can be expressed as the minimum of a convex function $f$ subject to linear constraints.
Here $f$ is fixed and  the linear constraints depend on $a_i$. 

\par\medskip\noindent\underline{\bf Step~4.} 
Given a minimization problem of a convex function $f$ subject to linear constraints, we provide a necessary and sufficient condition for an element $x$ to be a minimizer of $f$.

\par\medskip\noindent\underline{\bf Step~5.} We explicitly state the condition derived in Step~4 under our present setting. 

\subsection{Definition and Basic Property of the function $L_n$}\label{section:function_L_n}

\begin{notation} We use the following notation. 
\begin{enumerate}
\item[$\bullet$] $\mcH:=\C^2$
\item[$\bullet$] $\Omega^n:=\{\pm1\}^n$
\item[$\bullet$] $D^n:=\{x\colon \Omega^n\to\R^3\ |\ \A \omega\in \Omega^n,\ x(-\omega)=-x(\omega)\}$
\item[$\bullet$] $\Omega_i^n:=\{\omega\in \Omega^n\ |\ \omega_i=1\}$ (for $i=1,\dots,n$).
\item[$\bullet$] For $x\in D^n$,
\begin{align*}
x_{(i)} &:= \sum_{\omega\in \Omega_i^n} x(\omega)\quad \in\R^3, \\
|x| &:=\sum_{\omega\in \Omega_1^n} \|x(\omega)\|. 
\end{align*}
\end{enumerate}
\end{notation}

In the definition of $|x|$, we may replace $\Omega_1$ with $\Omega_i$. In fact, the following holds:
\begin{note} 
For any $i\in\{1,\dots,n\}$ and $x\in D^n$, we have:
\[ |x| = \sum_{\omega \in \Omega_i^n} \|x(\omega)\|. \]
\end{note}

\begin{note} Let $E_1,\dots,E_n$ be unbiased binary qubit POVMs on $\Omega$. 
Then, 
a POVM $J\colon \Omega^n\to\mcB_+(\mcH)$ is a joint device of $E_1,\dots,E_n$ if and only if 
\[ \sum_{\omega\in \Omega^n_i} J(\omega) = E_i(1)\qquad (\text{for }i=1,\dots,n).  \]
\end{note}


\begin{dfn} We define a function $L_n$ on $(\R^3)^n$ as follows. 
\[ L_n(a_1,\dots,a_n) := \min\{ |x|\in \R_{\ge0} \ |\ x\in D^n \text{ such that } \A i, a_i = x_{(i)}\} \]
We may simply write $L$ for $L_n$ when $n$ is clear from the context.
\end{dfn}

\begin{rem}
Since the function $|\cdot | \colon D^n\to \R$ is convex, 
the value $L_n(a_1,\cdots, a_n)$ exists in $\R$ and is finite. 
\end{rem}

\begin{lem}\label{lem:existence_of_joint_device_and_the_value_of_L}
For unbiased binary qubit POVMs $a_1,\dots,a_n\in\R^3$ with $\|a_i\|\le1$, 
the following conditions are equivalent. 
\begin{enumerate}
\item[\rm(i)] $a_1,\dots,a_n$ is jointly measurable. 
\item[\rm(ii)] $L(a_1,\dots,a_n)\le 1$. 
\end{enumerate}
\end{lem}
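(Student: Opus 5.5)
Write every operator on $\mcH$ as $J(\omega)=\frac12\bigl(t(\omega)I+y(\omega)\cdot\sigma\bigr)$ with $t(\omega)\in\R$ and $y(\omega)\in\R^3$. Positivity of $J(\omega)$ is equivalent to $\|y(\omega)\|\le t(\omega)$. The plan is to translate the joint-device conditions into conditions on $(t,y)$, and then pass to the antisymmetric part of $y$, which is exactly the variable of $L_n$. The marginal condition from the Note, $\sum_{\omega\in\Omega_i^n}J(\omega)=E_i(1)$ for all $i$, becomes $\sum_{\omega\in\Omega_i^n}t(\omega)=1$ and $\sum_{\omega\in\Omega_i^n}y(\omega)=a_i$. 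In addition, a POVM must satisfy $\sum_{\omega\in\Omega^n}J(\omega)=I$, i.e.\ $\sum_\omega t(\omega)=2$ and $\sum_\omega y(\omega)=0$. Since the $E_i$ are POVMs, the complementary sums over $\omega_i=-1$ are then automatic.

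For (i)$\Rightarrow$(ii), take a joint device $J$ with data $(t,y)$ and set $x(\omega):=\frac12\bigl(y(\omega)-y(-\omega)\bigr)$, so $x\in D^n$. The map $\omega\mapsto-\omega$ is a bijection from $\Omega_i^n$ onto $\{\omega:\omega_i=-1\}$. Hence
\[ x_{(i)}=\tfrac12\Bigl(\sum_{\omega\in\Omega_i^n}y(\omega)-\sum_{\omega_i=-1}y(\omega)\Bigr)=\tfrac12\bigl(a_i-(0-a_i)\bigr)=a_i. \]
By the triangle inequality and positivity,
\[ |x|\le\tfrac12\sum_{\omega\in\Omega_1^n}\bigl(\|y(\omega)\|+\|y(-\omega)\|\bigr)=\tfrac12\sum_{\omega\in\Omega^n}\|y(\omega)\|\le\tfrac12\sum_{\omega}t(\omega)=1. \]
Therefore $L\le |x|\le 1$.

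For (ii)$\Rightarrow$(i), take a minimizer $x$ with $|x|=L\le1$ and $x_{(i)}=a_i$. Define
\[ J(\omega):=\tfrac12\bigl(t(\omega)I+x(\omega)\cdot\sigma\bigr),\qquad t(\omega):=\|x(\omega)\|+\frac{1-L}{2^{n-1}}. \]
Each $J(\omega)$ is positive because $t(\omega)\ge\|x(\omega)\|$. The required conditions then hold as follows.
\begin{itemize}
\item The $y$-part of the marginal condition holds because $\sum_{\omega\in\Omega_i^n}x(\omega)=x_{(i)}=a_i$.
\item The total $y$-sum vanishes by antisymmetry of $x$.
\item For the $t$-part, the Note gives $\sum_{\omega\in\Omega_i^n}\|x(\omega)\|=|x|=L$ for every $i$. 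Since $|\Omega_i^n|=2^{n-1}$, we get $\sum_{\omega\in\Omega_i^n}t(\omega)=L+(1-L)=1$.
\item Because $t(-\omega)=t(\omega)$ and $\Omega^n=\Omega_1^n\sqcup(-\Omega_1^n)$, the total $t$-sum is $2$.
\end{itemize}
So $J$ is a joint device.

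The only delicate point is the uniform padding in the second direction: it must keep every marginal $t$-sum equal to $1$ at the same time. This works because the Note makes $|x|$ independent of $i$ and all the $\Omega_i^n$ have the same cardinality. The hypothesis $\|a_i\|\le1$ is not needed beyond ensuring the $E_i$ are POVMs. Existence of the minimizer is given by the preceding Remark.
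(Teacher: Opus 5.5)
Your proof is correct and is essentially the paper's argument: your antisymmetrization $x(\omega)=\frac12\bigl(y(\omega)-y(-\omega)\bigr)$ is exactly the Bloch part of the self-dual device $(J+J^*)/2$ that the paper obtains via Lemma~\ref{lem:dual}, and your triangle-inequality step plays the role of that averaged device's positivity. In the converse, your uniform padding $t(\omega)=\|x(\omega)\|+(1-L)/2^{n-1}$ is just one explicit choice of the paper's symmetric weight $\tau$ with $\tau(\omega)\ge\|\xi(\omega)\|$ and $\sum_{\Omega_1^n}\tau=1$.
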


To prove Lemma~\ref{lem:existence_of_joint_device_and_the_value_of_L}, we introduce the concept of the dual of a POVM and establish a preliminary lemma (Lemma~\ref{lem:dual}):

\begin{dfn}
Any $A \in \mcB_+(\mcH)$ can be uniquely expressed in terms of $t \in \R_{\ge 0}$ and $a \in \R^3$ as 
\[ A = t I + a \cdot \sigma,\] where $\|a\| \leq t$. 
We then define
\[A^* := t I - a \cdot \sigma \in \mcB_+(\mcH)\] 
and call it the \textbf{Pauli dual of $A$}.
For a POVM $J\colon \Omega^n\to \mcB_+(\mcH)$, we define its \textbf{dual} $J^*$ by 
\[ J^*(\omega):=(J(-\omega))^*. \]
We call $J$ \textbf{self dual} if $J=J^*$. 
\end{dfn}

\begin{lem}\label{lem:dual}
For unbiased binary qubit POVMs $E_1,\dots,E_n$, the following conditions are equivalent. 
\begin{enumerate}
\item[\rm(i)] $E_1,\dots,E_n$ admit a joint device. 
\item[\rm(ii)] $E_1,\dots,E_n$ admit a self dual joint device. 
\end{enumerate}
\end{lem}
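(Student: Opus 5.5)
The implication (ii)$\Rightarrow$(i) is trivial, since a self dual joint device is in particular a joint device. For (i)$\Rightarrow$(ii), I will start from an arbitrary joint device $J$ and symmetrize it. The candidate is
\[ K(\omega):=\tfrac12\bigl(J(\omega)+J^*(\omega)\bigr), \]
and there are three things to check: $K$ is again a POVM, $K$ is again a joint device of $E_1,\dots,E_n$, and $K$ is self dual.

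\textbf{Positivity.} Write $J(\omega)=t_\omega I+b_\omega\cdot\sigma$ with $\|b_\omega\|\le t_\omega$. Then
\[ J^*(\omega)=t_{-\omega}I-b_{-\omega}\cdot\sigma, \]
and this is positive because $\|b_{-\omega}\|\le t_{-\omega}$. It follows that $K(\omega)\in\mcB_+(\mcH)$.

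\textbf{Normalization.} Using $\sum_\omega t_\omega=1$ and $\sum_\omega b_\omega=0$ (which is what $\sum_\omega J(\omega)=I$ says), we get $\sum_\omega J^*(\omega)=I$. Hence $\sum_\omega K(\omega)=I$.

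\textbf{Marginals.} By the Note above, it suffices to check $\sum_{\omega\in\Omega_i^n}K(\omega)=E_i(1)=\tfrac12(I+a_i\cdot\sigma)$. Compute
\[ \sum_{\omega_i=1}J^*(\omega)=\sum_{\omega_i=-1}\bigl(t_\omega I-b_\omega\cdot\sigma\bigr). \]
Since $\sum_{\omega_i=-1}J(\omega)=E_i(-1)=\tfrac12(I-a_i\cdot\sigma)$, the trace part gives $\sum_{\omega_i=-1}t_\omega=\tfrac12$ and the Pauli part gives $\sum_{\omega_i=-1}b_\omega=-\tfrac12a_i$. Therefore the sum above equals $\tfrac12 I+\tfrac12 a_i\cdot\sigma=E_i(1)$. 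So $J^*$ is itself a joint device, and hence the average $K$ is one too.

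\textbf{Self duality.} The map $A\mapsto A^*$ is real-linear on Hermitian matrices written in the form $tI+a\cdot\sigma$, and it is involutive. Consequently $J\mapsto J^*$ is a linear involution, and so
\[ K^*=\tfrac12(J^*+J^{**})=\tfrac12(J^*+J)=K. \]

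The only delicate point is this linearity: $A^*$ was defined only on $\mcB_+(\mcH)$, but it is given by the same formula on the cone of positive operators, which is closed under convex combinations, so linearity there is enough. The rest is bookkeeping.
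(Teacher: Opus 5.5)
Your proof is correct and follows the paper's argument: you show that $J^*$ is again a joint device, note that averages of joint devices are joint devices, and take $(J+J^*)/2$. The only difference is that you write out in Bloch coordinates the positivity, normalization, and involutivity checks that the paper calls straightforward.
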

\begin{proof}
Since (ii)$\Rightarrow$(i) is clear, we verify (i)$\Rightarrow$(ii).
It is straightforward to see the following: 
\par\medskip\noindent\underline{\bf Claim.} 
\begin{enumerate}
\item[(1)] If $J$ is a joint device of $E_1,\dots,E_n$, then so is $J^*$. 
\item[(2)] If $J_1$, $J_2$ are joint devices of $E_1,\dots,E_n$, then so is $(J_1+J_2)/2$.
\end{enumerate}
In fact, (1) follows from 
\[ \sum_{\omega\in\Omega^n\;:\;\omega_i=\ep} J^*(\omega) = \left(\sum_{\omega\in\Omega^n\;:\;\omega_i=-\ep} J(\omega)\right)^* 
                                     = (E_i(-\ep))^*      = E_i(\ep). 
\]
(2) is also straightforward. 

Now assume (i) and let $J$ be a joint device of $E_1,\dots,E_n$. 
By Claim, $(J+J^*)/2$ is a self dual joint device of 
$E_1,\dots,E_n$, which proves (ii). 
\end{proof}

\begin{proof}[Proof of Lemma~\ref{lem:existence_of_joint_device_and_the_value_of_L}]
It is enough to show that the following conditions are equivalent. 
\begin{enumerate}
\item[\rm(i)] $E_1,\dots,E_n$ admit a joint device. 
\item[\rm(ii)] There exists a map $x\colon \Omega^n\to\R^3$ such that 
\begin{enumerate}
\item[\rm(a)] $x(-\omega)=-x(\omega)$ for any $\omega\in \Omega^n$, 
\item[\rm(b)] $|x|\leq 1$, 
\item[\rm(c)] $x_{(i)}=a_i$ ($i=1, \cdots, n$).
\end{enumerate}
\end{enumerate}
First we show (i)$\Rightarrow$(ii). 
Let $J$ be a joint device of $E_1,\cdots, E_n$. 
By Lemma~\ref{lem:dual}, we may assume $J$ is self dual. 
Here, $J$ can be written using some maps $\tau\colon \Omega^n\to\R_{\ge0}$ and $\xi\colon \Omega^n\to\R^3$ as 
\[ J(\omega) = \frac12(\tau(\omega) I + \xi(\omega)\cdot \sigma)\qquad (\omega\in\Omega^n). \]
Then, one can easily check the following: 
\par\noindent\underline{\bf Claim.}
\begin{enumerate}
\item[(1)] For any $\omega\in \Omega^n$, we have
$\|\xi(\omega)\| \le \tau(\omega)$, $\tau(\omega) = \tau(-\omega)$, and $\xi(-\omega)=-\xi(\omega)$.
\item[(2)] For any $i=1,\dots,n$, we have $\ds \sum_{\omega\in \Omega^n_i} \tau(\omega) = 1$ and $\ds \sum_{\omega\in \Omega^n_i} \xi(\omega) = a_i$
\end{enumerate}
In particular, the map $\xi\colon\Omega^n\to\R^3$ satisfies the conditions (a), (b) and (c). 
\par\medskip We show (ii)$\Rightarrow$(i). Take a map $\xi\colon\Omega^n\to\R^3$ satisfying the conditions (a), (b) and (c). 
By the condition (b), we can and do take a map $\tau_1\colon \Omega^n_1\to\R_{\ge0}$ such that 
\[ \A \omega\in \Omega^n_1,\ \|\xi(\omega)\|\le \tau_1(\omega) \]
and 
\[ \sum_{\omega\in \Omega^n_1} \tau_1(\omega) = 1. \]
Here, we note that for any $\omega \in \Omega^n$, exactly one of $\omega \in \Omega^n_1$ and $-\omega \in \Omega^n_1$ holds, 
and define $\tau\colon \Omega^n\to\R_{\ge0}$ by 
\[ \tau(\omega):= \begin{cases}
\tau_1(\omega) &\text{(if $\omega\in \Omega^n_1$)},\\
\tau_1(-\omega) &\text{(if $-\omega \in \Omega^n_1$)}.\\
\end{cases} \]
Furthermore, we define $J\colon \Omega^n\to \mcB_+(\mcH)$ by 
\[ J(\omega) := \frac12(\tau(\omega)I+\xi(\omega)\cdot\sigma). \]
Then, one can easily verify that $J$ is a joint device of $E_1,\dots,E_n$. 
\end{proof}

\begin{lem}\label{lem:conv}
Let $a_1,\dots,a_n\in\R^3$. 
The value of the function $L(a_1,\dots,a_n)$ depends only on 
the convex set $\conv\{\pm a_i\ |\ i\in\{1,\dots,n\}\}$. 
\end{lem}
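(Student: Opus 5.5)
The plan is to reduce the lemma to one \emph{adjunction property}: for $a_1,\dots,a_n\in\R^3$ and any $b\in\conv\{\pm a_i\mid i\in\{1,\dots,n\}\}$,
\[ L_{n+1}(a_1,\dots,a_n,b)=L_n(a_1,\dots,a_n). \]
We also use that $L_n$ is invariant under permuting its arguments. This holds because permuting coordinates of $\Omega^n$ preserves $D^n$, permutes the $x_{(i)}$, and preserves $|\cdot|$ (by the Note that $|x|$ may be computed over any $\Omega^n_i$).

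Granting this, the lemma follows. Let $(a_1,\dots,a_n)$ and $(b_1,\dots,b_m)$ have the same set $K=\conv\{\pm a_i\}=\conv\{\pm b_j\}$. Each $b_j$ lies in $K$, so adjoining $b_1,\dots,b_m$ one at a time gives $L(a_1,\dots,a_n)=L(a_1,\dots,a_n,b_1,\dots,b_m)$; adjoining a point to a list only enlarges its hull, so each new point still lies in the hull of the current list. Symmetrically, $L(b_1,\dots,b_m)=L(b_1,\dots,b_m,a_1,\dots,a_n)$. Permutation invariance then identifies the two right-hand sides.

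\textbf{Inequality $\ge$.} Take a minimizer $x\in D^{n+1}$ and marginalize the last coordinate: $y(\omega):=x(\omega,1)+x(\omega,-1)$ for $\omega\in\Omega^n$. Then $y(-\omega)=-y(\omega)$, and $y_{(i)}=x_{(i)}=a_i$ for $i\le n$. Since $\omega\in\Omega^n_1$ forces both $(\omega,\pm1)\in\Omega^{n+1}_1$, the triangle inequality gives $|y|\le|x|$.

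\textbf{Inequality $\le$ (main step).} Take a minimizer $y\in D^n$ for $(a_1,\dots,a_n)$ and write $b=\sum_k\lambda_k s_k a_{i_k}$, with $\lambda_k\ge0$, $\sum_k\lambda_k=1$ and $s_k\in\{\pm1\}$. I would split each $y(\omega)$ proportionally, setting $x(\omega,\ep):=p_\ep(\omega)\,y(\omega)$ with
\[ p_{1}(\omega):=\tfrac12\Bigl(1+\sum_k\lambda_k s_k\omega_{i_k}\Bigr)\in[0,1],\qquad p_{-1}(\omega):=1-p_1(\omega)=p_1(-\omega). \]
The following checks are then routine:
\begin{enumerate}
\item[(1)] Antisymmetry holds: $x(-\omega,-\ep)=p_{-\ep}(-\omega)\,y(-\omega)=-p_\ep(\omega)\,y(\omega)=-x(\omega,\ep)$.
\item[(2)] Since $p_1+p_{-1}=1$, we get $|x|=|y|$ and $x_{(i)}=y_{(i)}=a_i$ for $i\le n$.
\item[(3)] Using $\sum_{\omega\in\Omega^n}y(\omega)=0$ and $\sum_{\omega\in\Omega^n}\omega_i\,y(\omega)=2y_{(i)}$ (both from antisymmetry),
\[ x_{(n+1)}=\sum_{\omega\in\Omega^n}p_1(\omega)\,y(\omega)=\tfrac12\sum_k\lambda_k s_k\cdot 2y_{(i_k)}=b. \]
\end{enumerate}
Hence $L_{n+1}(a_1,\dots,a_n,b)\le|x|=|y|=L_n(a_1,\dots,a_n)$.

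The step I expect to be the main obstacle is finding this splitting $p_\ep$. It must be compatible with antisymmetry, which forces $p_{-1}(\omega)=p_1(-\omega)$. It must keep the norms additive, which is why $x(\omega,\pm1)$ are taken as nonnegative multiples of the same vector $y(\omega)$. And it must produce exactly $b$ as the new marginal. The affine choice of $p_1$ in $\omega$ is what makes all three conditions hold at once.
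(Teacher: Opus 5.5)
Your proof is correct and follows essentially the paper's argument: the same marginalization $y(\omega)=x(\omega,1)+x(\omega,-1)$ gives one inequality, and the same affine splitting $\frac{1\pm u_\omega}{2}$ gives the other (your $p_{\pm1}$ with $u_\omega=\sum_k\lambda_k s_k\omega_{i_k}$). Your reduction, which adjoins each list to the other and then permutes, makes the paper's separate sign-flip claim unnecessary, and your $|y|\le|x|$ is the directly verifiable form of what the paper states as $|y|=|x|$.
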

\begin{proof}
It is enough to show the following: 
\par\noindent\underline{\bf Claim.}
\begin{enumerate}
\item[\rm(i)] For any permutation $\sigma\in S_n$ of degree $n$, 
the following equality holds:
\[ L(a_1,\dots,a_n)=L(a_{\sigma(1)},\dots,a_{\sigma(n)}). \]
\item[\rm(ii)] The following equality holds:  
\[ L(a_1,\dots,a_n)=L(a_1,\dots,a_{n-1},-a_n).  \]
\item[\rm(iii)] For any $a_{n+1}\in \conv\{\pm a_i\ |\ i=1,\dots,n\}$, the following equality holds:
\[ L_n(a_1,\dots,a_n)=L_{n+1}(a_1,\dots,a_{n+1}). \]
\end{enumerate}

Since (i) is clear and (ii) is easy to verify, 
we show only (iii). 
Take any $a_{n+1}\in \conv\{\pm a_i\ |\ i=1,\dots,n\}$. 

First, we show the following:
\par\noindent\underline{\bf Subclaim.} $L_n(a_1,\dots,a_n)\le L_{n+1}(a_1,\dots,a_{n+1})$.\\
Take $x\in D^{n+1}$ such that
\begin{enumerate}
\item[\rm(i)] $|x|=L_{n+1}(a_1,\dots,a_{n+1})$, 
\item[\rm(ii)] $a_i = x_{(i)}\qquad (i=1,\dots,n+1)$, 
\end{enumerate}
and define $y\colon \Omega^n\to\R^3$ by 
\[ y(\omega):=x(\omega,1)+x(\omega,-1)\ \in \R^3\qquad (\omega\in \Omega^n). \]
It is enough to show the following conditions, which can be easily verified. 
\begin{enumerate}
\item[$\bullet$] $y\in D^n$, 
\item[$\bullet$] $|y|=|x|$, 
\item[$\bullet$] $\A i\in \{1,\cdots,n\},\ a_i=y_{(i)}$. 
\end{enumerate}

Next, we show the following:
\par\noindent\underline{\bf Subclaim.} $L_n(a_1,\dots,a_n) \ge L_{n+1}(a_1,\dots,a_{n+1})$.\\
Take $x\in D^n$ such that 
\begin{enumerate}
\item[\rm(i)] $|x|=L_n(a_1,\dots,a_n)$, 
\item[\rm(ii)] $a_i = x_{(i)}\qquad (i=1,\dots,n)$. 
\end{enumerate}
Take 
$c_1,\dots,c_n\in\R$ such that 
\begin{enumerate}
\item[\rm(i)] $a_{n+1}=\sum_{i=1}^n c_i a_i$
\item[\rm(ii)] $\sum_{i=1}^n|c_i|\le 1$
\end{enumerate}
For $\omega\in \Omega^n$, we put 
\[ u_\omega :=\sum_{i=1}^n \omega_i c_i \in \R,  \]
and define a map $y\colon \Omega^{n+1}\to\R^3$ by 
\[ y(\omega,\pm 1):=\frac{1\pm u_\omega}2 x(\omega). \]
Then, it is enough to show 
\par\noindent\underline{\bf Subsubclaim.}
\begin{enumerate}
\item[\rm(i)] $y\in D^{n+1}$, 
\item[\rm(ii)] $a_i=y_{(i)}\qquad (i=1,\dots,n+1)$, 
\item[\rm(iii)] $|y|=|x|$. 
\end{enumerate}

This Subsubclaim comes from the following: 
\begin{enumerate}
\item[\rm(i)] $|u_\omega|\le 1$, 
\item[\rm(ii)] $u_{-\omega} = -u_\omega$, 
\item[\rm(iii)] $\sum_{\omega\in \Omega^n} x(\omega)=0$, 
\item[\rm(iv)] $x_{(i)} = \frac12\sum_{\omega\in \Omega^n} \omega_i x(\omega)$. 
\end{enumerate}

\end{proof}

\begin{cor} The existence of a joint device of $a_1,\dots,a_n\in \R^3$ depends only on the convex polytope $\conv\{\pm a_i\ |\ i=1,\dots,n\}$. 
\end{cor}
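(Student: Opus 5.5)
This follows by combining Lemma~\ref{lem:existence_of_joint_device_and_the_value_of_L} with Lemma~\ref{lem:conv}. Let $a_1,\dots,a_n$ and $b_1,\dots,b_m$ be two families of Bloch vectors (all of norm at most $1$) such that
\[ \conv\{\pm a_i\mid i\in\{1,\dots,n\}\}=\conv\{\pm b_j\mid j\in\{1,\dots,m\}\}. \]
We will show that $a_1,\dots,a_n$ are jointly measurable if and only if $b_1,\dots,b_m$ are.

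The first step is to note that, by Lemma~\ref{lem:conv}, the value $L(a_1,\dots,a_n)$ is determined by the common convex set. Hence
\[ L_n(a_1,\dots,a_n)=L_m(b_1,\dots,b_m). \]
If one prefers to see explicitly why Lemma~\ref{lem:conv} covers families of different lengths, one can pass through the concatenated family $(a_1,\dots,a_n,b_1,\dots,b_m)$. Every $b_j$ lies in $\conv\{\pm a_i\}$, so repeated use of Claim~(iii) in the proof of Lemma~\ref{lem:conv} gives
\[ L_{n+m}(a_1,\dots,a_n,b_1,\dots,b_m)=L_n(a_1,\dots,a_n). \]
Symmetrically, every $a_i$ lies in $\conv\{\pm b_j\}$, so by Claim~(i) (permutation invariance) and Claim~(iii) the same quantity equals $L_m(b_1,\dots,b_m)$.

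The second step uses Lemma~\ref{lem:existence_of_joint_device_and_the_value_of_L}. It says that joint measurability of a family is equivalent to $L\le 1$ for that family. Since the two $L$-values coincide, $a_1,\dots,a_n$ are jointly measurable exactly when $b_1,\dots,b_m$ are.

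The only point needing care is that Lemma~\ref{lem:existence_of_joint_device_and_the_value_of_L} requires each Bloch vector to have norm at most $1$. This is automatic, since the vectors are Bloch vectors of POVMs. I do not expect any real obstacle: the corollary is a direct combination of the two lemmas.
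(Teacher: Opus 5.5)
Your proposal is correct and is exactly the argument the paper intends: the paper states the corollary without proof, as an immediate consequence of Lemma~\ref{lem:conv} together with Lemma~\ref{lem:existence_of_joint_device_and_the_value_of_L}. Your concatenation step via Claims~(i) and~(iii) makes explicit the comparison of families of different lengths, which the paper leaves implicit, and it shows in passing that Claim~(ii) is not needed for this.
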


\subsection{Matrix formulation of the minimization problem}\label{section:matrix_formulation}

We use the notation $D^n$, $|x|$, $x_{(i)}$ as in the previous section. 
By the definition of the function $L_n$, determining the value of $L_n$ is equivalent to solving the following minimization problem.

\begin{prob}\label{prob:min}
Minimize $|x|$ for $x\in D^n$ subject to $\A i$, $a_i=x_{(i)}$.
\end{prob}

By rewriting this problem in matrix form, we aim to formulate it as a minimization problem of a convex function subject to linear constraints.
\begin{notation}
Let $N := 2^{n-1}$. For each $i \in \{0, \dots, N-1\}$, we represent $i$ as an $n$-bit binary number. Replacing $0$ with $1$ and $1$ with $-1$, we regard the resulting sequence as an element of $\Omega^n$. We denote this element by $\beta_i \in \Omega^n$ and call it a binary vector.
\end{notation}
\begin{ex} 
For example, for $n=4$, we have:
\begin{align*} \beta_0&=(1,1,1,1)\in \Omega^4, \\
   \beta_1&=(1,1,1,-1)\in \Omega^4, \\
\vdots \\
   \beta_7&=(1,-1,-1,-1)\in \Omega^4. \end{align*}
\end{ex}

Noting that $x \in D^n$ is uniquely determined by its values on $\beta_0, \dots, \beta_{N-1}$, we naturally identify $D^n$ with $M(3,N)$ via the correspondence
\[x \mapsto [x(\beta_0),\dots,x(\beta_{N-1}) ] \in M(3,N).\] 
Let $B$ be the $N \times n$ matrix whose rows are given by $\beta_0, \dots, \beta_{N-1}$. Namely, we put 
\begin{equation} \label{eq:def:B}
 B:=\begin{pmatrix}\beta_0\\\vdots\\\beta_{N-1}\end{pmatrix} \in M(N,n). 
\end{equation}

Moreover, we set 
\begin{equation}\label{eq:def:A}
A:=(a_1,\dots,a_n) \in M(3,n). 
\end{equation}

\begin{note} 
Via the identification $x\in D^n \simeq M(3,N)$, 
we obtain 
\[ \A i, a_i = x_{(i)} \iff A=x B. \]
\end{note}

Furthermore we define a convex function $f\colon M(3,N)\to\R$ by \begin{equation}\label{eq:def:f}
f(x):=\sum_{j=1}^N \|x_j\| \qquad (x\in M(3,N)), 
\end{equation}
where $x_j\in \R^3$ is the $j$-th column vector of $x$.

Then, Problem~\ref{prob:min} can be reformulated as follows: 
\begin{prob}\label{prob:fmin}
Minimize $f(x)$ for $x\in M(3,N)$ subject to $A=x B$.
\end{prob}

\subsection{Minimizing problem with linear constraints}
In this section, we review the general theory of the minimization problem of a convex function subject to linear constraints. 
\begin{setting} 
Let the following be given:
\begin{enumerate}
\item[$\bullet$] $V,W$: finite dimensional real vector spaces
\item[$\bullet$] $f\colon V\to\R$: a convex function
\item[$\bullet$] $g\colon V\to W$: a linear map 
\item[$\bullet$] $w_0\in W$
\end{enumerate}
\end{setting}
We consider the following minimization problem. 
\begin{prob}\label{prob:general}
Minimize $f(x)$ for $x\in V$ subject to $g(x)=w_0$.
\end{prob}

As a solution to this problem, we have the following well-known result.

\begin{fact} Assume that $x^*\in V$ satisfies $g(x^*)=w_0$. 
Then the following conditions are equivalent. 
\begin{enumerate}
\item[\rm(i)] $x=x^*$ is a minimizer for Problem~\ref{prob:general}.
\item[\rm(ii)] There exists $y\in W^\vee \text{ such that } g^\vee(y) \in \partial f(x^*)$. 
\end{enumerate}
Here, $V^\vee$, $W^\vee$ are the dual spaces of $V,W$, respectively, and $g^\vee\colon W^\vee\to V^\vee$ the dual map of $g$.
For $x\in V$, $\partial f(x)\subset V^\vee$ is the subdifferential of $f$ at $x$.
\end{fact}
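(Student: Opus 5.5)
The proof has two directions: (ii)$\Rightarrow$(i) comes straight from the subgradient inequality, and (i)$\Rightarrow$(ii) is a Hahn--Banach extension from $\ker g$ to $V$. Throughout we use the pairing $\langle g^\vee(y), v\rangle = \langle y, g(v)\rangle$ for $y\in W^\vee$ and $v\in V$.

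\textbf{(ii)$\Rightarrow$(i).} Suppose $p:=g^\vee(y)\in\partial f(x^*)$. Let $x$ be feasible, i.e.\ $g(x)=w_0$. The subgradient inequality gives
\[ f(x)\ \ge\ f(x^*)+\langle p,x-x^*\rangle = f(x^*)+\langle y,g(x)-g(x^*)\rangle = f(x^*), \]
since $g(x)=g(x^*)=w_0$. So $x^*$ is a minimizer.

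\textbf{(i)$\Rightarrow$(ii).} The feasible set is the affine subspace $x^*+K$ with $K:=\ker g$. Minimality of $x^*$ says $f(x^*+k)\ge f(x^*)$ for all $k\in K$. Let
\[ f'(x^*;v):=\lim_{t\downarrow0}\frac{f(x^*+tv)-f(x^*)}{t} \]
be the directional derivative. Because $f$ is finite and convex on the finite-dimensional space $V$, this limit exists and is finite for every $v$. Moreover $v\mapsto f'(x^*;v)$ is sublinear, and $f'(x^*;k)\ge0$ for $k\in K$ by minimality.

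\begin{enumerate}
\item[$\bullet$] \emph{Extension.} The zero functional on $K$ is dominated by $f'(x^*;\cdot)$ on $K$. By the Hahn--Banach theorem (finite-dimensional, algebraic form), it extends to a linear functional $p\in V^\vee$ with $p\le f'(x^*;\cdot)$ on all of $V$ and $p|_K=0$.
\item[$\bullet$] \emph{$p$ is a subgradient.} By convexity, $f(x^*+v)-f(x^*)\ge f'(x^*;v)\ge \langle p,v\rangle$ for every $v\in V$. Hence $p\in\partial f(x^*)$.
\item[$\bullet$] \emph{$p$ lies in the image of $g^\vee$.} Since $p$ vanishes on $\ker g$, it factors through $g(V)\subset W$. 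Extend the induced functional from $g(V)$ to all of $W$ to obtain $y\in W^\vee$ with $p=y\circ g=g^\vee(y)$. Equivalently, this is the finite-dimensional identity $(\ker g)^\perp=\operatorname{im} g^\vee$.
\end{enumerate}

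The main obstacle is the extension step, i.e.\ producing a subgradient of $f$ that annihilates $\ker g$. Two points need checking there: that $f'(x^*;\cdot)$ is finite everywhere, which holds because finite convex functions on $\R^m$ are continuous, and that it is sublinear. Both are standard. Alternatively, one could quote the subdifferential sum rule for $f$ plus the indicator function of $x^*+K$, which is valid here because $f$ is continuous. Either route reduces to the same separation argument.
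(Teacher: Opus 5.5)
Your proof is correct, but it takes a different route from the paper. The paper does not argue from first principles. It sets $C=g^{-1}(w_0)$ and invokes Rockafellar's Theorem~27.4: $x^*$ minimizes a convex function over a convex set $C$ if and only if some subgradient lies in $-N_C(x^*)$. It then computes that the normal cone of the affine set $C$ is the annihilator of $\ker g$, which equals $g^\vee(W^\vee)$. Since this is a subspace, the sign is irrelevant and (ii) follows.

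You instead prove the needed special case directly. Sufficiency comes from the subgradient inequality. Necessity comes from extending the zero functional on $\ker g$, dominated by the sublinear map $f'(x^*;\cdot)$, via Hahn--Banach to a subgradient that vanishes on $\ker g$. The final step, $(\ker g)^\perp=\operatorname{im} g^\vee$, is common to both arguments.

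The paper's version is shorter and outsources the separation argument to the reference. Yours is self-contained, and it makes explicit where the hypothesis that $f$ is finite on all of $V$ enters: it guarantees $f'(x^*;\cdot)$ is finite, which Hahn--Banach needs. This is exactly the constraint qualification ($\operatorname{ri}(\operatorname{dom} f)$ meets $C$) under which the necessity half of Theorem~27.4 holds, and the paper uses it without comment.
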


\begin{proof}
We set $C=g^{-1}(w_0)$. Then, by \cite[Theorem~27.4]{R70}, the condition (i) is equivalent to $\partial f(x^\ast)\cap (-N_{C}(x^\ast))\neq \emptyset$, where $N_{C}(x^\ast)$ is the normal cone, that is,
\[
N_C(x^\ast) = \{\xi \in V^\vee \mid \forall y\in C, \,\xi(y-x^\ast)\le 0 \}.
\]

Since we easily check that
\begin{align*}
N_C(x^\ast) &= \{\xi\in V^\vee \mid \forall z\in \Ker g,\,\xi(z)\le 0 \} \\
 &= \{\xi\in V^\vee \mid \forall z\in \Ker g,\,\xi(z)= 0 \}  = g^\vee(W^\vee),
\end{align*}
 we obtain the desired equivalence.
\end{proof}

\subsection{Application to our setting}
We now apply the above general theory to our setting. 
Define $A\in M(3,n),B\in M(N,n)$ and a convex function $f$ as \eqref{eq:def:A}, \eqref{eq:def:B} and \eqref{eq:def:f} in Section~\ref{section:matrix_formulation}, respectively. 

We put 
\begin{align*} V&:=M(3,N), \\
   W&:=M(3,n), \\
   w_0&:=A\in W,  \\
   g(x)&:=x B. 
\end{align*}
We identify $V^\vee,W^\vee$ with $V, W$, respectively via the Frobenius inner product ($\langle A,B\rangle = \tr (\t\!A B)$). 

A direct calculation yields the following: 
\begin{lem} Let $x\in V$.
\[ \partial f(x) = \{ (v_1,\dots, v_N) \in M(3,N)\ |\ \text{The following conditions (i) and (ii) hold} \} \]
\begin{enumerate}
\item[\rm(i)] $\A j\in J,\ v_j = \frac{x_j}{\|x_j\|}$. 
\item[\rm(ii)] $\A j\in\{1,\dots,N\},\ \|v_j\|\le 1$. 
\end{enumerate}
Here, 
\[ J:=\{j\in\{1,\dots,N\}\ |\ x_j\ne0\}. \]
\end{lem}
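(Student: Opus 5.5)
The function $f(x)=\sum_{j=1}^N\|x_j\|$ is a sum of functions each depending on only one column of $x$. So the plan is to compute the subdifferential of a single Euclidean norm $h(z)=\|z\|$ on $\R^3$. Then I will invoke the standard fact that the subdifferential of a separable sum is the product of the subdifferentials. Throughout, $V=M(3,N)$ is identified with $V^\vee$ via the Frobenius inner product $\langle v,x\rangle=\sum_j v_j\cdot x_j$.

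\textbf{Step 1: one column.} I will show that $\partial h(z)=\{z/\|z\|\}$ if $z\neq0$ and $\partial h(0)=\{v\in\R^3 \mid \|v\|\le1\}$.

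For $z=0$: the condition $v\in\partial h(0)$ means $\|w\|\ge v\cdot w$ for all $w$. By Cauchy--Schwarz, and by testing $w=v$, this is equivalent to $\|v\|\le 1$.

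For $z\ne0$: $h$ is differentiable at $z$ with gradient $z/\|z\|$, so the subdifferential is that singleton. To keep things self-contained I would check it directly. Let $v\in\partial h(z)$. Testing $w=tz$ with $t\to1^\pm$ gives $v\cdot z=\|z\|$. Testing $w=z+su$ for $u\perp z$ and small $s$ gives $v\cdot u=0$. Hence $v=z/\|z\|$. Conversely, $z/\|z\|$ is a subgradient, since $\|w\|\ge (z/\|z\|)\cdot w = \|z\|+(z/\|z\|)\cdot(w-z)$.

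\textbf{Step 2: separability.} I will show $\partial f(x)=\prod_j\partial h(x_j)$.

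The inclusion $\supseteq$ is obtained by summing the column inequalities $\|w_j\|\ge\|x_j\|+v_j\cdot(w_j-x_j)$.

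For $\subseteq$, take $v\in\partial f(x)$ and fix $j$. Apply the subgradient inequality to a $w$ that differs from $x$ only in column $j$. All other terms cancel, which gives $v_j\in\partial h(x_j)$.

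Combining Steps 1 and 2 gives exactly conditions (i) and (ii) of the lemma. Note that for $j\in J$, condition (ii) is automatic from (i), since $\|x_j/\|x_j\|\|=1$.

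Nothing here is a real obstacle. The only point requiring care is the $\subseteq$ direction in Step 2, and the column-by-column test vectors handle it. Alternatively, one could cite \cite[Theorem~23.8]{R70} for sums of convex functions together with the differentiability criterion \cite[Theorem~25.1]{R70}.
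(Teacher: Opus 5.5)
Your proposal is correct, and it is the ``direct calculation'' the paper states without proof. The paper's argument is exactly this: $f$ is separable in the columns, so $\partial f(x)$ is the product over $j$ of the subdifferentials of the Euclidean norm at $x_j$. That factor is $\{x_j/\|x_j\|\}$ when $x_j\neq 0$ and the closed unit ball when $x_j=0$. Your Steps 1 and 2 supply this omitted verification correctly.
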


\begin{lem} The dual map $g^\vee$ of $g$ is given as follows:
\[ g^\vee(y) = y\;\t\! B\qquad (y\in M(3,n)). \]
\end{lem}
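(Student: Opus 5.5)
The claim is that $g^\vee(y)=y\,{}^t\!B$ for $y\in M(3,n)$, where $g(x)=xB$ and both $V=M(3,N)$ and $W=M(3,n)$ are identified with their duals through the Frobenius inner product $\langle P,Q\rangle=\tr({}^t\!P Q)$. My plan is to check the defining identity of the dual map,
\[ \langle g^\vee(y),x\rangle_V=\langle y,g(x)\rangle_W \qquad (x\in V,\ y\in W). \]
Under the identification, a functional on $V$ is given by a unique matrix, so $g^\vee(y)$ is the unique $z\in M(3,N)$ with $\langle z,x\rangle=\langle y,xB\rangle$ for all $x$. Thus it is enough to show that $z=y\,{}^t\!B$ satisfies this identity. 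First note that the sizes match: $y\,{}^t\!B$ is a product of a $3\times n$ and an $n\times N$ matrix, so it lies in $M(3,N)$.

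The computation uses only transposition and cyclicity of the trace. We have
\[ \langle y,xB\rangle=\tr({}^t\!y\,x\,B)=\tr(B\,{}^t\!y\,x)=\tr\bigl({}^t\!(y\,{}^t\!B)\,x\bigr)=\langle y\,{}^t\!B,x\rangle. \]
The middle step moves $B$ to the front using $\tr(PQ)=\tr(QP)$. The next step uses ${}^t\!(y\,{}^t\!B)=B\,{}^t\!y$.

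Uniqueness follows from nondegeneracy of the Frobenius pairing, which gives $g^\vee(y)=y\,{}^t\!B$. There is no real obstacle here. The only point needing care is to keep the transpose conventions consistent with the chosen identification $V^\vee\simeq V$, $W^\vee\simeq W$, so that the later condition $g^\vee(y)\in\partial f(x^*)$ reads as $y\,{}^t\!B\in\partial f(x^*)$ columnwise.
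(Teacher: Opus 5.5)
Your proof is correct and is essentially the paper's argument: both establish the adjoint identity $\langle xB, y\rangle = \langle x, y\,{}^t\!B\rangle$ by a one-line trace computation using transposition and cyclicity of the trace. The only addition on your side is the explicit appeal to nondegeneracy of the Frobenius pairing for uniqueness, which the paper leaves implicit.
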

\begin{proof}
Take any $x\in V = M(3,N)$ and $y\in W^\vee=M(3,n)$. 
The following calculation shows our assertion. 
\[ \langle x, g^\vee(y)\rangle = \langle x B, y\rangle = \tr\left(\t(x B)y\right) = \tr\left(\t x y\t\!B\right)
=\langle x, y\t\!B\rangle
 \]
\end{proof}

Summarizing the above discussion, we obtain the following.
\begin{lem}\label{lem:lvalue} Let $c\in\R$. 
The following conditions are equivalent. 
\begin{enumerate}
\item[\rm(i)] $L(a_1,\dots,a_n)=c$. 
\item[\rm(ii)] There exist $X=(x_1,\dots,x_N) \in M(3,N)$ and $Y\in M(3,n)$ such that
\begin{enumerate}
\item\label{item:norm_condition} $\sum_{j\in J} \|x_j\|=c$, 
\item\label{item:linear_restriction} $A=XB$, 
\item\label{item:subderivative_1} $\A j\in J,\ (Y \;\t\!B)_j=\hat x_j$, 
\item\label{item:subderivative_2} $\A j\in\{1,\dots,N\},\ \|(Y\;\t\!B)_j\|\le 1$, 
\end{enumerate}
\end{enumerate}
where
\begin{align*} J&:=\{j\in\{1,\dots,n\}\ |\ x_j\ne 0\}, \\
   \hat x_j &:= \frac{x_j}{\|x_j\|} \qquad (\text{for $j\in J$}). 
\end{align*}
\end{lem}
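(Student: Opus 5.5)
The plan is to obtain Lemma~\ref{lem:lvalue} by combining the Fact from the general theory with the two preceding lemmas, which compute $\partial f$ and $g^\vee$. We work in Problem~\ref{prob:fmin}, which is Problem~\ref{prob:min} rewritten in matrix form. There $L(a_1,\dots,a_n)$ is the minimum of $f(X)=\sum_j\|x_j\|$ over all $X\in M(3,N)$ with $XB=A$. One point needs care: $J$ must index the columns of $X$, so it should be read as a subset of $\{1,\dots,N\}$. With that reading, condition (a) just says $f(X)=c$, because the columns with $x_j=0$ contribute nothing to the sum.

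For (ii)$\Rightarrow$(i), take $X$ and $Y$ as in (ii). By (b), $X$ is feasible, i.e.\ $g(X)=w_0$. Conditions (c) and (d), together with the description of $\partial f(X)$ in the first lemma, say exactly that $Y\,\t\!B\in\partial f(X)$. By the lemma on the dual map this element equals $g^\vee(Y)$, so condition (ii) of the Fact holds with $y=Y$. The Fact then shows that $X$ is a minimizer, hence $L(a_1,\dots,a_n)=f(X)=\sum_{j\in J}\|x_j\|=c$ by (a).

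For (i)$\Rightarrow$(ii), we need a minimizer to exist. The feasible set $\{X\mid XB=A\}$ is nonempty, since the construction in the proof of Lemma~\ref{lem:conv} yields feasible points, or simply because $B$ has full column rank $n$: its rows include all sign patterns with first coordinate $1$. The function $f$ is a norm on $M(3,N)$, so it is coercive, and continuity gives a minimizer $X$ on this closed affine set. This is presumably what the earlier remark on the existence of $L_n$ means. For this $X$ we have (b) by feasibility and (a) because $f(X)=L=c$. The Fact then supplies $Y\in W^\vee=M(3,n)$ with $g^\vee(Y)\in\partial f(X)$. Rewriting $g^\vee(Y)=Y\,\t\!B$ and unpacking membership in $\partial f(X)$ via the subdifferential lemma gives (c) and (d).

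The only place needing care is existence of the minimizer, which rests on feasibility and coercivity of $f$. The subdifferential of $f$ is itself routine: $f$ is a separable sum of Euclidean norms of the columns, the subdifferential of $\|\cdot\|$ is $\{x/\|x\|\}$ at $x\neq 0$ and the closed unit ball at $0$, and the sum rule holds since each summand depends on its own column. Everything else is direct substitution.
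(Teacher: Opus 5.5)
Your proposal is correct and follows the paper's route: the paper obtains the lemma simply by combining the Fact with the computations of $\partial f$ and $g^\vee$, as you do. Your extra care is also justified on two points the paper glosses over.

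First, $J$ must be read as a subset of $\{1,\dots,N\}$. The paper's $\{1,\dots,n\}$ is a typo, as its later computation $J=\{[i]\mid i\in\{1,\dots,n\}\}$ shows.

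Second, existence of a minimizer follows from your full-column-rank feasibility argument together with coercivity of the norm $f$. The paper's remark cites only convexity. Drop the appeal to the proof of Lemma~\ref{lem:conv}, since that proof already assumes minimizers exist.
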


\bigskip
Namely, to show that the value of the function $L$ is equal to a specific $c \in \R$, it suffices to find matrices $X$ and $Y$ satisfying conditions (\ref{item:norm_condition}), (\ref{item:linear_restriction}), (\ref{item:subderivative_1}), and (\ref{item:subderivative_2}).

\section{2-dim case}\label{section:2-dim_case}
Let $a_1,\dots,a_n\in\R^2\subset \R^3$. 
Our goal of this section is to prove the following:
\begin{prop}\label{prop:2dim}
The value of the function $L(a_1,\dots,a_n)$ is equal to one-fourth of the perimeter of the convex polygon $\conv\{\pm a_i\ |\ i\in\{1,\dots,n\}\}$.
\end{prop}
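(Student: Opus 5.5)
By Lemma~\ref{lem:conv}, we may replace $a_1,\dots,a_n$ by any generating set of the same symmetric polygon $P:=\conv\{\pm a_i\}$. Using Claim (i), (ii) and (iii) from its proof, we reduce to the following normalized situation. The vectors $a_1,\dots,a_n$ are the vertices of $P$ lying in a closed half-plane, taken in counterclockwise order, so that the vertices of $P$ in cyclic order are
\[ a_1,a_2,\dots,a_n,-a_1,\dots,-a_n. \]
Put $e_i:=a_{i+1}-a_i$ for $i<n$ and $e_n:=-a_1-a_n$. The perimeter is then $2\sum_{i=1}^n\|e_i\|$, so the claim becomes
\[ L(a_1,\dots,a_n)=\tfrac12\sum_{i=1}^n\|e_i\|. \]
(If $P$ is degenerate, a segment, the statement reduces to $L(a)=\|a\|$, which is immediate.) The plan is to exhibit $X$ and $Y$ satisfying the conditions of Lemma~\ref{lem:lvalue} with $c=\frac12\sum\|e_i\|$.

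\textbf{Primal construction.} We support $X$ on the $n$ sign patterns
\[ \omega^{(k)}:=(-1,\dots,-1,1,\dots,1)\qquad(k=0,\dots,n-1), \]
with the first $k$ entries equal to $-1$ (up to the global sign needed to lie among the $\beta_j$). The natural guess is $x(\omega^{(k)})=\pm\frac12 e_k$ with suitable indexing. The reason is that $x_{(i)}=\frac12\sum_\omega\omega_i x(\omega)$, and the telescoping sums of the $e_i$ along the "staircase" patterns recover each $a_i$. Concretely, we solve $\sum_k \omega^{(k)}_i y_k=a_i$ for $y_k$. Differences in $i$ give $a_{i+1}-a_i=-2y_i$, and the case $i=1$ fixes the remaining vector as $\frac12(a_1+a_n)$-type, which is $-\frac12 e_n$. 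This yields $f(X)=\frac12\sum_i\|e_i\|$ and verifies condition (b).

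\textbf{Dual certificate.} We seek $Y=(y_1,\dots,y_n)\in M(3,n)$ (planar) such that $(Y\,{}^t\!B)_j=\sum_i\beta_{j,i}y_i$ has norm at most $1$ for every sign vector $\beta_j$. It must also equal the unit vector $\hat e_k$ (with the right sign) on the $n$ staircase patterns. Guided by the primal, we take $y_i$ to satisfy $\sum_i\omega^{(k)}_iy_i=u_k$, where $u_k$ is the unit direction of the corresponding edge. This determines $y_i=\frac12(u_{i}-u_{i-1})$-type differences of consecutive edge directions, with a twist from $-u$ at the wrap-around. The edge directions $u_1,\dots,u_n,-u_1,\dots,-u_n$ are monotone in angle, since $P$ is convex, and go once around the circle.

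\textbf{Main obstacle.} The hard step is verifying $\|\sum_i\omega_iy_i\|\le1$ for all $2^n$ sign patterns, not just the staircase ones. Each $y_i$ is half a chord between consecutive points $u_{i-1},u_i$ of the unit circle, and these chords partition a half-turn. For an arbitrary sign pattern, $\sum\omega_iy_i$ is a signed sum of such half-chords. We would argue that grouping consecutive equal signs telescopes it to half a sum of alternating-sign unit vectors $\frac12(\pm v_0\mp v_1\pm\cdots)$, with $v_0,v_1,\dots$ in angular order within a half-turn (plus antipodal wrap). We would then show that such an alternating sum of unit vectors in cyclic order within a semicircle has norm at most $2$. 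This is a classical fact: the alternating sum of points in angular order on an arc of length $\le\pi$ is dominated by its first and last terms, which can be proved by induction, pairing consecutive terms, or via the zonotope $\{\sum t_iy_i: |t_i|\le1\}$ being a polygon inscribed in the unit circle whose vertices are exactly the staircase sums. The zonotope viewpoint is what I would try first: the vertices of the zonogon generated by the $y_i$ are precisely the sums over staircase sign patterns, which are the unit vectors $u_k$. Hence every sign sum lies in this convex polygon inscribed in the unit disk, which gives condition (d). Finally, Lemma~\ref{lem:lvalue} yields $L=\frac12\sum\|e_i\|$, one quarter of the perimeter.
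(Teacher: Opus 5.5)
Your proposal is correct. For conditions (a)--(c) it is the paper's construction. You use the same reduction to the vertices $a_1,\dots,a_n$ and the same support on the $n$ staircase patterns (the paper's columns $[i]=2^{n-i}$). Your primal is the paper's $X=\frac12K\,{}^t\!P$. Up to signs and indexing, your dual is the paper's $Y=\frac12\hat K\,{}^t\!C$, with columns $y_1=\frac12(\hat k_1+\hat k_n)$ and $y_m=\frac12(\hat k_m-\hat k_{m-1})$ for $m\ge2$.

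The genuine difference is in condition (d). The paper computes $s_j=\frac12\,{}^tC\,{}^tBe_j$ (Lemma~\ref{lem:sum:binary}) and so recognizes each $(Y\,{}^t\!B)_j$ as an alternating sum $\hat k_{i_0}-\hat k_{i_1}+\dots+\hat k_{i_{2m}}$. It then bounds that sum by Proposition~\ref{prop:norm_ineq}, proved by induction through the nested lenses $D(s,t)$. You instead observe that the $y_i$ are the half-chords between consecutive points of $-\hat k_n,\hat k_1,\dots,\hat k_n$. These are the half-edges of the centrally symmetric polygon $Q$ inscribed in the unit circle with vertices $\pm\hat k_i$. So the zonogon $\sum_i[-y_i,y_i]$ is $Q$, and every sign sum $\sum_i\omega_iy_i$ lies in $Q$, hence in the unit disk.

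Your route is shorter and more conceptual. It needs no pattern-by-pattern analysis of $s_j$ and no induction, and the same argument even reproves Proposition~\ref{prop:norm_ineq}. The paper's route is fully elementary and gives the finer localization of the sum in $D(\theta_0,\theta_{2n})$.

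The step you must actually write out is the vertex description:
\begin{enumerate}
\item[$\bullet$] The points $-\hat k_n,\hat k_1,\dots,\hat k_n$ have strictly increasing arguments spanning exactly $\pi$.
\item[$\bullet$] Hence the arguments of $y_1,\dots,y_n$, including the wrap-around generator $y_1$, increase with the index within an arc of length less than $\pi$.
\item[$\bullet$] So for generic $\varphi$, the indices with $\langle y_i,\varphi\rangle>0$ form an initial or a final segment of $\{1,\dots,n\}$.
\item[$\bullet$] Therefore every vertex of the zonogon is a staircase sum, and these sums equal $\pm\hat k_i$ by condition (c).
\end{enumerate}

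You should drop your first sketch (grouping runs of equal signs), because its bookkeeping is off. A unit vector at a sign change is contributed by both adjacent runs, and the twist merges the two end terms. In fact $\sum_m\omega_my_m=\sum_{i=1}^n\frac12(\omega_i-\omega_{i+1})\hat k_i$ with $\omega_{n+1}:=-\omega_1$. So the surviving coefficients are $\pm1$, not $\pm\frac12$; for example, $n=3$ and $\omega=(1,-1,1)$ give $\hat k_1-\hat k_2+\hat k_3$. The bound actually needed is norm at most $1$ for an odd alternating sum, which is exactly Proposition~\ref{prop:norm_ineq}. The bound $2$ you proposed to prove there would not suffice.
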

Our main result (Theorem~\ref{thm:main}) directly follows from this proposition combining with Lemma~\ref{lem:existence_of_joint_device_and_the_value_of_L}. 

The discussion proceeds as follows.
\par\medskip\noindent\underline{\bf Step~1.} We introduce a condition called well-arranged. We verify that it suffices to prove the proposition only for the case where $a_1, \dots, a_n$ are well-arranged.

\par\medskip\noindent\underline{\bf Step~2.} We construct the matrices $X$ and $Y$ via auxiliary matrices $K, \hat{K}, P,$ and $C$.

\par\medskip\noindent\underline{\bf Step~3.} We show that $X$ and $Y$ satisfy conditions (\ref{item:norm_condition}), (\ref{item:linear_restriction}), and (\ref{item:subderivative_1}) through a relatively straightforward direct calculation.

\par\medskip\noindent\underline{\bf Step~4.} Condition (\ref{item:subderivative_2}) requires that the norm of a certain vector be at most $1$. To this end, we prove a necessary norm inequality, which we then use to establish condition (\ref{item:subderivative_2}). 

\subsection{Well-arranged}
We introduce a condition called well-arranged and verify that it is enough to prove Proposition~\ref{prop:2dim} only for the case where $a_1, \dots, a_n$ are well-arranged.

\begin{dfn} We say $a_1,\dots,a_n\in \R^2$ are {\bf well-arranged} if 
\begin{enumerate}
\item[\rm(i)] $a_1$,\dots, $a_n$ are distinct from each other and nonzero. 
\item[\rm(ii)] $\arg a_1 < \arg a_2 < \dots <\arg a_n < \arg a_1 + \pi (= \arg a_{n+1})$
\item[\rm(iii)] $\arg(a_2-a_1) < \arg(a_3-a_2) < \dots < \arg (a_{n+1}-a_n)$
\end{enumerate}
Here, $a_{n+1}:=-a_1$.
\end{dfn}

By Lemma~\ref{lem:conv}, the value of $L$ depends only on the convex closure $\{\pm a_i\ |\ i\in\{1,\dots,n\}\}$.
By removing unnecessary $a_i$, rearranging them appropriately, and replacing $a_i$ with $-a_i$ if necessary, 
we can make $a'_1, \dots, a'_{n'}$ well-arranged without changing the convex hull $\{\pm a_i \mid i \in \{1, \dots, n\}\}.$
Therefore, it suffices to prove Proposition~\ref{prop:2dim} for the case where $a_1, \dots, a_n$ are well-arranged. 

\subsection{Construction of $X$ and $Y$}
Now, we assume that $a_1,\dots,a_n \in \R^2\subset \R^3$ are well-arranged.

Below, we construct the matrices $X$ and $Y$ of Lemma~\ref{lem:lvalue} explicitly and determine the value of $L$. 
We first list the sizes of the matrices that appear throughout the discussion.
\begin{center}
\begin{tabular}{c|c}
$(3,n)$ & $A,K,\hat K,Y$ \\
$(N,n)$ & $B,P$ \\
$(3,N)$ & $X$ \\
$(n,n)$ & $C$ 
\end{tabular}
\end{center}
Here, $A$ and $B$ are the matrices defined previously in Section~\ref{section:matrix_formulation}.

\begin{dfn} We define matrices $K,\hat K$. 
First, we put 
\[ k_i:= a_i - a_{i+1} \qquad (i\in\{1,\dots,n\}). \]
Since $\{a_i\}$ is well-arranged, $k_i\ne0$ for any  $i\in \{1,\dots,n\}$. Then we put
\[ \hat k_i:=\frac{k_i}{\|k_i\|} \]
and define $K,\hat K$ by 
\begin{align*} K&:=(k_1,\dots,k_n) \in M(3,n), \\
   \hat K&:=(\hat k_1,\dots,\hat k_n) \in M(3,n). 
\end{align*}
\end{dfn}

\begin{picture}(100,0)(-260,20)
\put(0,0){\includegraphics{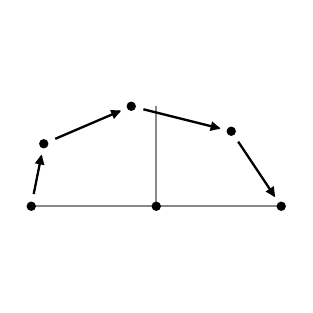}}
\put(145,50){$a_1$}
\put(117,92){$a_2$}
\put(60,110){$a_3$}
\put(5,85){$a_4$}
\put(-10,50){$-a_1$}
\put(110,65){$k_1$}
\put(80,85){$k_2$}
\put(40,80){$k_3$}
\put(20,60){$k_4$}
\end{picture}

\begin{dfn} We define matrices $P$ and $C$. For $i=1,\dots,n$, 
we use the following notation:
\[ [i]:=2^{n-i}. \]
We define $N\times n$ matrix $P$ and $n\times n$ matrix $C=(c_{ij})$ as follows:
\begin{align*} 
(P)_j &:= e_{[j]}\\
c_{ij} &:= \begin{cases}
1 &\text{($i=j$ or $(i,j)=(1,n)$)}\\
-1 &\text{($i=j+1$)}\\
0 &\text{(otherwise)}\\
\end{cases}
\end{align*}
Here, $(P)_j$ denotes the $j$-th column vector of $P$, and $e_1,\dots,e_N$ denote the standard basis vectors of $\R^N$. 
\end{dfn}
\begin{note}
The equality $K=A C$ holds. 
\end{note}
\begin{ex} For $n=4$,
\[ C=\begin{pmatrix}1&0&0&1\\
              -1&1&0&0\\
              0&-1&1&0\\
              0&0&-1&1\end{pmatrix}\in M(4),\qquad P=(e_8,e_4,e_2,e_1) \in M(8,4). \]
\end{ex}
\begin{dfn} We define matrices $X,Y$ as follows:
\begin{align*} X&:= \frac12 K\;\t\!P \in M(3,N),\\
   Y&:= \frac12 \hat K\;\t\!C \in M(3,n).\end{align*}
\end{dfn}

To prove Proposition~\ref{prop:2dim}, it is enough to show the following conditions: 


\begin{enumerate}
\item[\rm(a)] $\sum_{j\in J} \|x_j\|=\frac{1}{4}(\text{the perimeter of }\conv\{\pm a_i\ |\ i\in\{1,\dots,n\}\})$, 
\item[\rm(b)] $A=XB$, 
\item[\rm(c)] $\A j\in J, (Y\;\t\!B)_j=\hat x_j$, 
\item[\rm(d)] $\A j\in\{1,\dots,N\},\ \|(Y\;\t\!B)_j\|\le 1$, 
\end{enumerate}
where
\begin{align*} J&:=\{j\in\{1,\dots,n\}\ |\ x_j\ne 0\}, \\
   \hat x_j &:= \frac{x_j}{\|x_j\|} \qquad (\text{for $j\in J$}). 
\end{align*}

In what follows, we verify these conditions.
\subsection{The condition (a)}
We determine the set $J$ and verify the condition (a).

\begin{ex}Let $\varepsilon_i$ be the standard basis of $\R^n$. For $n=4$, since we have 
\[ \t\!P=(\varepsilon_4,\varepsilon_3,0,\varepsilon_2,0,0,0,\varepsilon_1) \in M(4,8), \]
we get 
\begin{align*} X &= \frac12 K\t\!P \\
     &= \frac12 (k_4,k_3,0,k_2,0,0,0,k_1). \end{align*}
Therefore, 
\[ J=\{1,2,4,8\} \]
holds, and furthermore, 
\[ \sum_{j\in J} \|x_j\| = \frac 12\sum_{i=1}^4 \|k_i\| = \frac14(\text{the perimeter}) \]
holds. 
\end{ex}

Similarly, in general case, 
since we have 
\[ (\t\!P)_j= \begin{cases}
\varepsilon_i &\text{($[i]=j$)},\\
0   &\text{(otherwise)},\\
\end{cases} \]
we get 
\[ x_j = \begin{cases}
\frac12k_i &\text{($[i]=j$)},\\
0  &\text{(otherwise)}.\\
\end{cases} \]
Therefore, 
\[ J=\{[i]\ |\ i\in\{1,\dots,n\}\} \]
holds, and furthermore, 
\[\sum_{j\in J}\|x_j\| =\frac12\sum_{i=1}^n \|k_i\|=\frac14(\text{the perimeter}) \]
holds. 
\subsection{The conditions (b) and (c)}
We verify the conditions (b) and (c). To this end, we use the following: 
\begin{lem}\label{lem:relation_BCP}
\[ C\inv = \frac12 \;\t\!P B. \]
\end{lem}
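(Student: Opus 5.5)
The plan is to verify the equivalent identity $\frac12\,\t\!P B\, C = I_n$ by computing $\t\!P B$ entrywise and then multiplying by $C$ column by column. Since $C$ is square, a one-sided inverse is automatically two-sided, so this proves $C\inv=\frac12\,\t\!PB$.

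\textbf{Step 1: compute $M:=\t\!PB$.} The $i$-th row of $\t\!P$ is $\t e_{[i]}$. Hence the $i$-th row of $M$ is the $[i]$-th row of $B$. The indexing of $e_1,\dots,e_N$ is shifted by one relative to $\beta_0,\dots,\beta_{N-1}$; the example $P=(e_8,e_4,e_2,e_1)$ for $n=4$ confirms this. So the $[i]$-th row of $B$ is the binary vector $\beta_{[i]-1}=\beta_{2^{n-i}-1}$.

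In $n$-bit binary, $2^{n-i}-1$ consists of $i$ zeros followed by $n-i$ ones. After the replacement $0\mapsto 1$, $1\mapsto -1$, we obtain
\[ M_{ik} = \begin{cases} 1 & (k\le i),\\ -1 & (k>i).\end{cases} \]
This index bookkeeping is the only delicate point: the conventions for $[i]$, $\beta$ and $e_j$ must be read off consistently. I would double-check it against the $n=4$ example before proceeding.

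\textbf{Step 2: compute $MC$ column by column.} For $j<n$, the $j$-th column of $C$ has $1$ in row $j$, $-1$ in row $j+1$, and zeros elsewhere. Therefore
\[ (MC)_{ij}=M_{ij}-M_{i,j+1}. \]
This equals $2$ exactly when $j\le i<j+1$, that is when $i=j$, and $0$ otherwise.

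For $j=n$, the $n$-th column of $C$ has $1$ in rows $1$ and $n$. Therefore
\[ (MC)_{in}=M_{i1}+M_{in}=1+(\pm1), \]
where the sign is $+$ iff $i=n$. So $(MC)_{in}=2\delta_{in}$.

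\textbf{Conclusion.} Steps 1 and 2 give $MC=2I_n$. Hence $\frac12\,\t\!PB$ is a left inverse of $C$, and since $C$ is square it is the inverse. Beyond the indexing check in Step 1, I expect no real obstacle.
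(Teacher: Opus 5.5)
Your proof is correct and is essentially the paper's argument: both identify the $i$-th row of ${}^tPB$ as $\beta_{[i]-1}=(1,\dots,1,-1,\dots,-1)$ (with $i$ leading ones) and then multiply by $C$ to get $2I_n$. The only cosmetic difference is that you compute ${}^tPB\,C$ column by column, whereas the paper computes $C\,{}^tPB$ row by row via $\beta_{[i]-1}-\beta_{[i-1]-1}$ and $\beta_{[1]-1}+\beta_{[n]-1}$; since $C$ is square, the two are equivalent.
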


\begin{proof}
A direct calculation implies that 
\begin{align*}
(\text{the $i$-th row vector of } \t\! PB)= \beta_{[i]-1}\quad (i=1,\cdots,n). 
\end{align*}
Furthermore, by the definition of $C$, we get 
\begin{align*}
\text{(the $i$-th row vector of $C\t\! P B$)} =
\begin{cases}
 \beta_{[1]-1} + \beta_{[n]-1}=2 \t\!\varepsilon_1 & (i=1),\\
 \beta_{[i]-1} - \beta_{[i-1]-1}=2\t\!\varepsilon_i & (2\le i\le n),
\end{cases}
\end{align*}
where $\varepsilon_1,\cdots, \varepsilon_n$ denote the standard basis vectors of $\R^n$. 
Thus, we obtain $C\t\! PB=2I_n$. 
\end{proof}

This lemma leads to the condition (b) ($A = XB$) through the following direct computation.
\[ X B = \frac 1 2 K \;\t\!P B
       = \frac 1 2 A C \;\t\!P B
       = A.  \]
Next, we verify the condition (c) ($\A j\in J, (Y \;\t\!B)_j=\hat x_j$). 
Take any $j\in J$. Then we can and do take $i\in\{1,\dots,n\}$ such that $[i]=j$. 
Noting that $P \varepsilon_i = e_{[i]}$ holds, 
we have 
\begin{align*} (Y\;\t\!B)_j & = Y\;\t\! B e_{[i]} \\
                & = \frac12 \hat K \;\t\! C \;\t\! B P \varepsilon_i \\
                & = \frac12 \hat K \t\!(\t\!P B C) \varepsilon_i \\
                & = \hat K \varepsilon_i \\
                & = \hat k_i. 
\end{align*}
Here, we used Lemma~\ref{lem:relation_BCP} for the third equality above. 
On the other hand, $x_j = x_{[i]} = \frac12 k_i$ holds. Thus, we obtain $\hat x_j = \hat k_i$, which proves that the condition (c) is satisfied.

\subsection{Norm inequality}
All that remains is to verify condition (d).
This asserts that the norm of a certain vector is at most $1$.
To prove this, we need the following: 

\begin{prop}\label{prop:norm_ineq} Let $n\in\Z_{\ge0}$. 
Assume that the $(2n+1)$ real numbers $\theta_0, \dots, \theta_{2n} \in \R$ satisfy the inequalities $\theta_0 \le \dots \le \theta_{2n} \le \theta_0+\pi$. 
Then, the following inequality holds:
\[ \left\|\sum_{i=0}^{2n}(-1)^i e(\theta_i) \right\| \le 1, \]
where, for $\theta\in\R$, $e(\theta)\in\R^2$ is defined by 
\[ e(\theta) := \begin{pmatrix}\cos\theta\\\sin\theta\end{pmatrix} \in \R^2. \]
\end{prop}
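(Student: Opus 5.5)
Our plan is induction on $n$, pairing consecutive terms. The case $n=0$ is trivial, since $\|e(\theta_0)\|=1$. The key identity is
\[ e(\alpha)-e(\beta) = 2\sin\tfrac{\alpha-\beta}{2}\, e\!\left(\tfrac{\alpha+\beta}{2}+\tfrac{\pi}{2}\right). \]
By rotation invariance we may assume $\theta_0=0$, so all angles lie in $[0,\pi]$.

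We will not induct naively. Instead we view the quantity as a function of the angles and maximize it over the compact region $0=\theta_0\le\theta_1\le\dots\le\theta_{2n}\le\pi$. We then show that at a maximizer two consecutive angles coincide. If $\theta_i=\theta_{i+1}$, the terms $(-1)^i e(\theta_i)$ and $(-1)^{i+1}e(\theta_{i+1})$ cancel. What remains is an alternating sum of $2n-1$ terms whose angles are still monotone within an interval of length $\pi$, so the induction hypothesis gives the bound $1$.

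To force a coincidence we use a variational argument. Let $S=\sum(-1)^ie(\theta_i)$ and consider $F=\|S\|^2$. Its derivative in an interior angle $\theta_i$ is $2(-1)^i\langle S, e(\theta_i+\pi/2)\rangle$. Suppose all the angles were distinct and strictly inside $(0,\pi)$, with $\theta_{2n}<\pi$. Then at a critical point $S$ would be parallel to $e(\theta_i)$ for every $i\ge1$ (or $S=0$, which is fine). This is impossible once there are two distinct interior angles not differing by $\pi$. So either $\|S\|$ is small, or two consecutive angles coincide, or an endpoint constraint is active: $\theta_1=\theta_0$, or $\theta_{2n}=\theta_{2n-1}$, or $\theta_{2n}=\pi$.

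The case $\theta_{2n}=\theta_0+\pi$ is handled by the relation $e(\theta_0+\pi)=-e(\theta_0)$. The first and last terms then combine to $2e(\theta_0)$, which does not directly reduce. Instead we rotate cyclically: the configuration $(\theta_1,\dots,\theta_{2n},\theta_0+\pi)$ satisfies the same hypotheses and has sum $-S+2e(\theta_0)$ up to sign bookkeeping. I expect this boundary case, and making the critical-point analysis airtight with constrained directions (the KKT sign conditions), to be the main obstacle.

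As an alternative, one can use a direct geometric argument, which I would try if the variational route gets messy. Pair the terms as $e(\theta_0)+\sum_{k=1}^{n}\bigl(e(\theta_{2k})-e(\theta_{2k-1})\bigr)$. Each bracket is a chord vector pointing in direction $\frac{\theta_{2k-1}+\theta_{2k}}2+\frac{\pi}{2}$ with length $2\sin\frac{\theta_{2k}-\theta_{2k-1}}2$. The sum of these chords is a path along the unit circle that starts at $e(\theta_0)$ and moves monotonically counterclockwise, skipping arcs, within a half-circle. The aim is to show that its endpoint $S$ lies in the closed unit disk. We would compare this path with the circle traversal, using concavity of $\sin$ on $[0,\pi/2]$ and the fact that the directions of the chords increase monotonically within a range of length $\le\pi$.
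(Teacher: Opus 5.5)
Your variational skeleton is sound, but as written it leaves open exactly the case you call the main obstacle, and the cause is a sign error. When $\theta_{2n}=\theta_0+\pi$, the first and last terms both carry the sign $(-1)^0=(-1)^{2n}=+1$. So they contribute $e(\theta_0)+e(\theta_0+\pi)=0$, not $2e(\theta_0)$. Hence
\[ S=\sum_{i=1}^{2n-1}(-1)^i e(\theta_i)=-\sum_{j=0}^{2n-2}(-1)^j e(\theta_{j+1}). \]
Up to sign, this is an alternating sum of $2n-1$ terms with $\theta_1\le\dots\le\theta_{2n-1}$ in an interval of length at most $\pi$, so the induction hypothesis gives $\|S\|\le 1$ immediately. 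Your cyclic-rotation fallback is also miscomputed. The alternating sum of $(\theta_1,\dots,\theta_{2n},\theta_0+\pi)$ is exactly $-S$, not $-S+2e(\theta_0)$. It is therefore a symmetry of the problem rather than a reduction, and would not have closed the case.

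Once this is fixed, your concern about KKT sign conditions disappears. Every active constraint reduces to the case $n-1$: this covers $\theta_i=\theta_{i+1}$ (including $\theta_1=\theta_0$ and $\theta_{2n-1}=\theta_{2n}$) and $\theta_{2n}=\theta_0+\pi$. So first-order conditions are needed only when no constraint is active. Then $(\theta_1,\dots,\theta_{2n})$ is an interior point of the feasible set and the ordinary gradient vanishes. The conditions $S\parallel e(\theta_1)$ and $S\parallel e(\theta_2)$, with $0<\theta_2-\theta_1<\pi$, force $S=0$.

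The resulting proof (compactness plus a first-order condition) is correct and genuinely different from the paper's. The paper proves the stronger statement that the alternating sum lies in the lens $D(\theta_0,\theta_{2n})=\{x:\|x\|\le1,\ \|x-e(\theta_0)-e(\theta_{2n})\|\le1\}$. It uses two facts: the lenses are nested ($D(s',t')\subset D(s,t)$ for $s\le s'\le t'\le t\le s+\pi$), and each is invariant under $y\mapsto e(\theta_0)+e(\theta_{2n})-y$. This lets it peel off the outermost pair of terms constructively. Your route avoids that geometric lemma at the cost of being non-constructive.

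Your chord-path alternative, on the other hand, is not yet an argument. The partial sums do not lie on the unit circle, and no inequality is actually derived from the concavity of $\sin$.
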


To prove this proposition, for $s,t\in\R$, we define a closed subset $D(s,t)\subset \R^2$ as follows:
\[ D(s,t):=\{x\in\R^2 \ |\ \|x\|\le 1,\ \|x-(e(s)+e(t))\|\le 1 \}. \]
\begin{picture}(150,10)(-320,110)
\put(0,0){\includegraphics{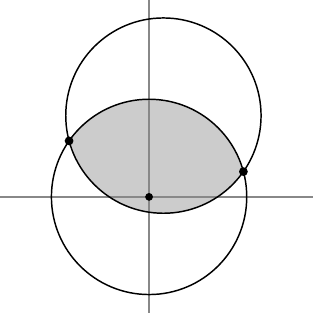}}
\put(123,65){$e_s$}
\put(17,82){$e_t$}
\put(60,74){$D(s,t)$}
\end{picture}

This subset has the following property. 
\begin{lem}\label{note:dst} For $s\le s'\le t' \le t\le s+\pi$, we have 
\[ D(s',t')\subset D(s,t). \]
\end{lem}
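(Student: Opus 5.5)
We need to prove $D(s',t')\subset D(s,t)$ whenever $s\le s'\le t'\le t\le s+\pi$. Write $m(s,t):=e(s)+e(t)$. Then $D(s,t)$ is the lens cut out by two unit discs, one centered at $0$ and one centered at $m(s,t)$. Both circles pass through $e(s)$ and $e(t)$, since $\|e(t)-m(s,t)\|=\|e(s)\|=1$ and symmetrically. So $D(s,t)$ is the lens bounded by two arcs joining $e(s)$ and $e(t)$: the arc $\{\|x\|=1\}$ through angles $[s,t]$, and its reflection through the midpoint $m/2$.

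The plan is to pass through the intermediate lens $D(s,t')$, proving $D(s',t')\subset D(s,t')$ and $D(s,t')\subset D(s,t)$. By the reflection symmetry $\theta\mapsto -\theta$, which reverses the order of the angles, it suffices to prove monotonicity in one endpoint: for $s\le s'\le t\le s+\pi$,
\[ D(s',t)\subset D(s,t). \]

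For this one-endpoint statement, take $x\in D(s',t)$. The condition $\|x\|\le1$ is shared by both lenses, so it remains to show $\|x-m(s,t)\|\le 1$. I would parametrize the center by the moving endpoint $u\in[s,s']$ and consider
\[ \phi(u):=\|x-e(u)-e(t)\|^2 . \]
Setting $y:=x-e(t)$, we have $\phi(u)=\|y\|^2-2\langle y,e(u)\rangle+1$. One option is to show $\phi$ is monotone nonincreasing from $u=s'$ down to $u=s$ on the relevant range. The cleaner option is to use convexity of the lens directly: it suffices to show that the extreme boundary points of $D(s',t)$ lie in $D(s,t)$.

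The boundary of $D(s',t)$ consists of two pieces.
\begin{enumerate}
\item[(1)] The unit-circle arc from $e(s')$ to $e(t)$. A point $e(\theta)$ with $\theta\in[s',t]\subset[s,t]$ satisfies
\[ \|e(\theta)-e(s)-e(t)\|^2 = 3-2\cos(\theta-s)-2\cos(\theta-t)+2\cos(t-s). \]
This should be $\le 1$ when $\theta$ lies between $s$ and $t$ with $t-s\le\pi$, and this is a short trigonometric identity to check.
\item[(2)] The reflected arc, namely the points $m(s',t)-e(\theta)$ for $\theta\in[s',t]$. Such a point is $e(s')+e(t)-e(\theta)$, and its distance to $m(s,t)$ is $\|e(s')-e(s)-e(\theta)\|$. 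Since $s\le s'\le\theta$, this is again the one-circle-inside statement of (1) with the triple $(s,s',\theta)$ in place of $(s,\theta,t)$, up to signs. Its norm from $0$ also needs to be at most $1$, but that already holds because the point lies in $D(s',t)$.
\end{enumerate}

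Since $D(s,t)$ is convex, being an intersection of discs, and $D(s',t)$ is the convex hull of its boundary, (1) and (2) together give $D(s',t)\subset D(s,t)$.

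The expected main obstacle is the trigonometric inequality in (1): $\|e(\theta)-e(s)-e(t)\|\le1$ for $s\le\theta\le t\le s+\pi$. To prove it, translate so that $\theta=0$ and write $s=-\alpha$, $t=\beta$ with $\alpha,\beta\ge0$ and $\alpha+\beta\le\pi$. The inequality becomes
\[ 1-\cos\alpha-\cos\beta+\cos(\alpha+\beta)\le 0 . \]
Using sum-to-product, the left side factors as
\[ -4\sin\tfrac{\alpha}{2}\,\sin\tfrac{\beta}{2}\,\cos\tfrac{\alpha+\beta}{2}, \]
and each factor has the right sign because $\alpha,\beta\ge0$ and $\alpha+\beta\le\pi$. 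I must double-check the sign bookkeeping in case (2), where the roles of the angles are permuted; there I would apply the same factored identity with $\alpha=s'-s$ and $\beta=\theta-s'$.
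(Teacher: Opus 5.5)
Your proof is correct. After the common first step it takes a different route from the paper.

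Both arguments reduce to moving one endpoint at a time. The paper fixes $s$, enlarges $t$, and appeals to ``a similar argument'' for the other endpoint; your reflection $\theta\mapsto-\theta$ makes that symmetry explicit.

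For the one-endpoint inclusion, the paper rotates to $s=0$ and translates the fixed corner $e(0)$ of the lens to the origin. It then describes $D(0,t)-e(0)=E(\pi)\cap E(t)$ in polar coordinates about that corner: at radius $r$ it is the angular interval $[\pi-a,\,t+a]$ with $a=\arccos(r/2)$, which visibly grows with $t$.

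You instead use two facts: $D(s,t)$ is convex, and the compact convex set $D(s',t)$ is the convex hull of its boundary. So only the two boundary arcs need checking. Both checks collapse to the single inequality $\|e(\theta)-e(s)-e(t)\|\le 1$ for $s\le\theta\le t\le s+\pi$, which your factorization $1-\cos\alpha-\cos\beta+\cos(\alpha+\beta)=-4\sin\frac{\alpha}{2}\sin\frac{\beta}{2}\cos\frac{\alpha+\beta}{2}$ settles. There is no sign issue in (2): $\|e(s')-e(s)-e(\theta)\|\le 1$ is literally case (1) for the triple $s\le s'\le\theta$, since $\theta-s\le t-s\le\pi$.

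The one step you state without proof is that $\partial D(s',t)$ lies in the union of those two arcs. The same factorization fills this in. Take $\alpha=\theta-s'$ and $\beta=t-\theta$ for arbitrary $\theta$. Then $\|e(\theta)-m(s',t)\|\le 1$ forces $\theta\in[s',t]$ modulo $2\pi$ when $t-s'<\pi$. The case $t-s'=\pi$ forces $s'=s$ and is trivial.

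The paper's route gives an explicit parametrization of the lens and uses no convexity. Yours avoids coordinates. It also shows that the lens monotonicity, and hence the induction proving Proposition~\ref{prop:norm_ineq}, rests only on the three-term case $n=1$ of that proposition.
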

\begin{proof}
We show that $s=s'$ and $s\le t' \le t \le s+\pi$ imply
\[ D(s,t')\subset D(s,t). \]
Once this is shown, a similar argument yields $D(s',t') \subset D(s,t')$. 
Furthermore, by the rotational symmetry of the problem, it suffices to prove the case where $s=0$.
Namely, we show that when $0\le t' \le t \le \pi$, 
\[ D(0,t')\subset D(0,t) \]
holds. We put
\[ E(t):=\{ x\in\R^2\ |\ \|x-e(t)\|\le 1\}. \]
The region obtained by translating $D(0,t)$ by $-1$ along the $x$-axis is given as follows:
\begin{align*} D(0,t)-\begin{pmatrix}1\\0\end{pmatrix} & = \{ x\in\R^2\ |\ \|x+e(0)\|\le1,\ \|x-e(t)\|\le1\} \\
                       & = E(t)\cap E(\pi). \end{align*}
Therefore, it is enough to show
\[ E(\pi)\cap E(t') \subset E(\pi)\cap E(t).\]
Take any $x\in E(\pi)$. 
Then we can write $x$ as $x = r e(\theta)$, where $r \in [0,2]$ and $\theta \in [\pi/2, 3\pi/2]$.  
Here, for $0\le t\le \pi$, we have 
\begin{align*} r e(\theta) \in E(t) & \iff r \le 2\cos(\theta-t) \\
                      & \iff \theta \in [t-\arccos (r/2), t+\arccos (r/2)]. 
\end{align*}
Therefore, by setting $a:=\arccos (r/2) \in [0, \pi/2]$, we get
\begin{align*} r e(\theta) \in E(\pi)\cap E(t) & \iff \theta \in [\pi-a,\pi+a] \cap [t-a, t+a] \\
                                 & \iff \theta \in [\pi-a, t+a]. \end{align*}
Thus, the inequalities $0\le t'\le t \le \pi$ imply 
\[ E(\pi)\cap E(t') \subset E(\pi)\cap E(t). \]

\end{proof}

Proposition~\ref{prop:norm_ineq} follows immediately from the following:.
\begin{lem}
Let $n\in\Z_{\ge0}$. Assume that $(2n+1)$ real numbers $\theta_0,\dots,\theta_{2n}\in\R$ satisfy the inequalities 
$\theta_0 \le \dots \le \theta_{2n} \le \theta_0+\pi$. 
Then we obtain
\[ \sum_{i=0}^{2n}(-1)^i e(\theta_i) \in D(\theta_0,\theta_{2n}). \]
\end{lem}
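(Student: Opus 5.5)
Proof plan: induction on $n$, using Lemma~\ref{note:dst} as the key monotonicity tool.

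\textbf{Base case.} For $n=0$ the sum is $e(\theta_0)$. We have $\|e(\theta_0)\|=1$ and $\|e(\theta_0)-(e(\theta_0)+e(\theta_0))\|=1$, so $e(\theta_0)\in D(\theta_0,\theta_0)$.

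\textbf{Inductive step.} Assume the claim for $n-1$ and take $\theta_0\le\dots\le\theta_{2n}\le\theta_0+\pi$. Set
\[ S':=\sum_{i=0}^{2n-2}(-1)^i e(\theta_i). \]
By the induction hypothesis, $S'\in D(\theta_0,\theta_{2n-2})$. Since $\theta_0\le\theta_0\le\theta_{2n-2}\le\theta_{2n}\le\theta_0+\pi$, Lemma~\ref{note:dst} gives $D(\theta_0,\theta_{2n-2})\subset D(\theta_0,\theta_{2n})$. The full sum is
\[ S=S'-e(\theta_{2n-1})+e(\theta_{2n}). \]
So it suffices to prove the following statement $(\ast)$.

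\emph{$(\ast)$: If $x\in D(s,t)$ with $s\le u\le t\le s+\pi$, then $x-e(u)+e(t)\in D(s,t)$.}

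Apply it with $s=\theta_0$, $u=\theta_{2n-1}$, $t=\theta_{2n}$. Rather than tracking $S'$ in the smaller region, I would prove $(\ast)$ directly using the symmetry of $D(s,t)$.

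\textbf{Proof of $(\ast)$ via the point reflection.} Let $m:=e(s)+e(t)$. The map $x\mapsto m-x$ swaps the two disks defining $D(s,t)$, so it preserves $D(s,t)$. Write $y:=x-e(u)+e(t)$; we must check $\|y\|\le1$ and $\|y-m\|\le1$.

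\begin{itemize}
\item \emph{Second condition.} We have $y-m=x-e(u)-e(s)$, so $\|y-m\|\le1$ means $x\in\{z:\|z-(e(s)+e(u))\|\le1\}$. This holds because $x\in D(s,t)$, and $D(s,t)\subset D(s,u)$ by Lemma~\ref{note:dst} (applied with $s\le s\le u\le t$), whose second defining disk is exactly this one.
\item \emph{First condition.} We need $\|x-(e(u)-e(t))\|\le1$. Let $x':=m-x\in D(s,t)$. Then
\[ x-(e(u)-e(t))=-\bigl(x'-(e(s)+e(u))\bigr)+\bigl(2e(t)-e(t)\bigr)-e(t)\cdot 0, \]
which is awkward. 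The cleaner route is to set $z:=x-e(u)+e(t)$ and use $m-z=x'-e(t)+e(u)$, i.e.\ to apply the same reflection argument with the roles of $s$ and $t$ swapped. That requires $D(s,t)\subset D(u,t)$, which again follows from Lemma~\ref{note:dst} since $s\le u\le t\le s+\pi$.
\end{itemize}

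I expect the main obstacle to be carrying out this reflection bookkeeping correctly, checking each of the two disk conditions via the appropriate inclusion $D(s,t)\subset D(s,u)$ or $D(s,t)\subset D(u,t)$.

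\textbf{Fallback.} If the reflection bookkeeping fails, I would instead argue geometrically. $D(s,t)$ is a lens symmetric about the bisector direction $e((s+t)/2)$, and the translation vector $e(t)-e(u)$ has length $2\sin((t-u)/2)$ and direction $(t+u)/2+\pi/2$. I would then check that translating by this vector maps the sublens $D(u,t)$ into $D(s,t)$, using the parametrization $r e(\theta)$ from the proof of Lemma~\ref{note:dst}.
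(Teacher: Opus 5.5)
There is a genuine gap: the statement $(\ast)$ to which you reduce the inductive step is false. For fixed $s\le u\le t$, $(\ast)$ says $D(s,t)+v\subset D(s,t)$ with $v=e(t)-e(u)$, and $v\neq0$ whenever $u<t$. Iterating would give $D(s,t)+kv\subset D(s,t)$ for all $k\ge1$, which is impossible for a bounded nonempty set. Concretely, take $s=0$, $u=\pi/2$, $t=\pi$. Since $e(0)+e(\pi)=0$, $D(0,\pi)$ is the closed unit disk, so $x=0\in D(0,\pi)$. Yet $x-e(\pi/2)+e(\pi)=(-1,-1)$ has norm $\sqrt2$.

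The two inclusions you use to check the disk conditions are also the wrong way round. Lemma~\ref{note:dst} says the lens of a subinterval lies inside the lens of the larger interval, so it gives $D(s,u)\subset D(s,t)$ and $D(u,t)\subset D(s,t)$. The reverse inclusions $D(s,t)\subset D(s,u)$ and $D(s,t)\subset D(u,t)$ fail in general: in the example above, $0\in D(0,\pi)$ but $0\notin D(0,\pi/2)$ and $0\notin D(\pi/2,\pi)$, since $\|e(0)+e(\pi/2)\|=\|e(\pi/2)+e(\pi)\|=\sqrt2>1$.

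The underlying mistake is that you enlarged $D(\theta_0,\theta_{2n-2})$ to $D(\theta_0,\theta_{2n})$ \emph{before} translating. A translation does not preserve the bigger lens, and the constraint you discarded, $\theta_{2n-2}\le\theta_{2n-1}$, is exactly what is needed. Your fallback has the same mismatch: $S'$ lies in $D(\theta_0,\theta_{2n-2})$, not in $D(u,t)=D(\theta_{2n-1},\theta_{2n})$.

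Your last-two-terms induction can be repaired by keeping $S'\in D(\theta_0,\theta_{2n-2})$ and applying Lemma~\ref{note:dst} in the correct direction twice.
\begin{itemize}
\item Since $D(\theta_0,\theta_{2n-2})\subset D(\theta_0,\theta_{2n-1})$, you get $\|S-(e(\theta_0)+e(\theta_{2n}))\|=\|S'-(e(\theta_0)+e(\theta_{2n-1}))\|\le1$.
\item Write $e(\theta_{2n})=-e(\theta_{2n}-\pi)$. The chain $\theta_{2n}-\pi\le\theta_0\le\theta_{2n-2}\le\theta_{2n-1}\le\theta_{2n}$ gives $D(\theta_0,\theta_{2n-2})\subset D(\theta_{2n}-\pi,\theta_{2n-1})$. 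Hence $\|S\|=\|S'-(e(\theta_{2n}-\pi)+e(\theta_{2n-1}))\|\le1$.
\end{itemize}

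The paper instead peels off the \emph{first and last} terms. The middle sum $z=\sum_{i=1}^{2n-1}(-1)^{i-1}e(\theta_i)$ lies in $D(\theta_1,\theta_{2n-1})\subset D(\theta_0,\theta_{2n})$ by induction and Lemma~\ref{note:dst}. The full sum is $e(\theta_0)+e(\theta_{2n})-z$, which is the image of $z$ under the point reflection you identified, and that reflection preserves $D(\theta_0,\theta_{2n})$ exactly. This is why enlarging the lens is harmless in the paper's route but fatal in yours.
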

\begin{proof}
We prove this by induction on $n$.
For $n=0$, taking any $\theta_0\in \R$, we have
\[ D(\theta_0,\theta_0)=\{e(\theta_0)\}, \]
which proves our lemma for $n=0$. 
Next, assuming that the lemma holds for $n-1$, we show that it holds for $n$.
Take $\theta_i\in\R$ with $\theta_0\le \dots \le \theta_{2n} \le \theta_0+\pi$ and put 
\begin{align*} x:=\sum_{i=0}^{2n}(-1)^i e(\theta_i), \quad 
   y:=\sum_{i=1}^{2n-1}(-1)^i e(\theta_i). 
\end{align*}
By the induction hypothesis and Lemma~\ref{note:dst}, $y\in D(\theta_1,\theta_{2n-1}) \subset D(\theta_0,\theta_{2n})$ holds.
Noting that $x=e(\theta_0) - y + e(\theta_{2n})$, we obtain
\begin{align*} y\in D(\theta_0,\ \theta_{2n}) & \iff \|y\|\le 1,\ \|y-(e(\theta_0)+e(\theta_{2n}))\|\le 1 \\
                               & \iff \|x-(e(\theta_0)+e(\theta_{2n}))\|\le 1,\ \|-x\|\le 1 \\
                               & \iff x\in D(\theta_0,\ \theta_{2n}).  \end{align*}
Thus, our lemma holds for $n$. 
\end{proof}

\subsection{The condition (d)}
Using Proposition~\ref{prop:norm_ineq} in the previous section, 
we verify the condition (d).

We show that for any $j\in\{1,\dots,N\}$, the norm of
\[ (Y\;\t\!B)_j = \frac12 \hat K \;\t\! C \;\t\!B e_j \]
is at most $1$. Namely, putting 
\[ s_j := \frac12 \t\!C \;\t\!B e_j, \]
we prove the norm of $\hat K s_j$ is at most $1$.
We note that $\hat K=(\hat k_1,\dots,\hat k_n)$. 
That is, viewing $\hat K$ as a collection of column vectors, they are all unit vectors, and $\hat k_1, \dots, \hat k_n$ are arranged in counterclockwise order. 
On the other hand, the following assertion on $s_j \in \R^n$ holds. 
\begin{lem}\label{lem:sum:binary} Let $i\in\{1,\dots,n\}$, $j\in\{1,\dots,N\}$, and $s_{ij}$ denotes the $i$-th component of $s_j$.
Then we obtain 
\begin{enumerate}
\item[\rm(i)] $\displaystyle \A k\in \{1,\dots,n\},\ \sum_{i=1}^k s_{ij} \in \{0,1\}$, 
\item[\rm(ii)] $\displaystyle \sum_{i=1}^n s_{ij} = 1$. 
\end{enumerate}
\end{lem}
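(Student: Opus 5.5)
The plan is to compute $s_j$ explicitly from the definitions of $B$ and $C$, and then read off both claims from telescoping sums.

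\textbf{Step 1: compute $s_j$.} First note that $\t\!B e_j$ is the $j$-th row of $B$ written as a column, that is, $\t\!\beta_{j-1}$. Write $\beta:=\beta_{j-1}\in\Omega^n$. Since $j-1\le N-1=2^{n-1}-1$, the leading binary digit of $j-1$ (as an $n$-bit number) is $0$. Hence $\beta_1=1$, and this is the only fact about $\beta$ we will need.

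Next compute $\t\!C$ entrywise: $(\t\!C)_{ik}=c_{ki}$. From the definition of $C$, row $i$ of $\t\!C$ has entry $1$ at position $i$ and entry $-1$ at position $i+1$ for $i<n$. For $i=n$ it has entry $1$ at position $n$ and entry $1$ at position $1$, the latter coming from $c_{1n}=1$. Therefore
\[
s_{ij}=\tfrac12(\beta_i-\beta_{i+1})\quad(1\le i\le n-1),\qquad s_{nj}=\tfrac12(\beta_n+\beta_1).
\]

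\textbf{Step 2: part (i).} For $k\le n-1$ the partial sum telescopes to
\[
\sum_{i=1}^k s_{ij}=\tfrac12(\beta_1-\beta_{k+1})=\tfrac12(1-\beta_{k+1}).
\]
Since $\beta_{k+1}\in\{\pm1\}$, this lies in $\{0,1\}$.

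\textbf{Step 3: part (ii), which also covers (i) for $k=n$.} For $k=n$ we add $s_{nj}$ to the telescoped sum:
\[
\sum_{i=1}^n s_{ij}=\tfrac12(\beta_1-\beta_n)+\tfrac12(\beta_n+\beta_1)=\beta_1=1.
\]

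The computation is routine. The only point needing care is the indexing: $e_j$ picks out $\beta_{j-1}$, not $\beta_j$, and the corner entry $c_{1n}$ appears in the last row of $\t\!C$. Getting either of these wrong would break the telescoping, so the main thing to check is the transposition of $C$ together with the fact $\beta_1=1$.
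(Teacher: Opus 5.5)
Your proof is correct and is essentially the paper's argument. The paper first sums the columns of $C$ to get $\sum_{i=1}^k C\varepsilon_i=\varepsilon_1-\varepsilon_{k+1}$ for $k<n$ and $2\varepsilon_1$ for $k=n$, then pairs this with ${}^tBe_j=\varepsilon_1\pm\varepsilon_2\pm\cdots\pm\varepsilon_n$; you compute the entries $s_{ij}$ first and then telescope, which is the same computation in the other order and rests on the same fact that the first entry of $\beta_{j-1}$ is $1$ (as in the paper, the last-entry formula tacitly uses $n\ge 2$, so that the corner entry $c_{1n}$ is distinct from $c_{nn}$).
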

Assuming this lemma, each component of $s_j$ is either $0$, $1$, or $-1$. Furthermore, ignoring the zeros and reading from top to bottom, the remaining entries alternate as $1, -1, 1, \dots, -1, 1$, starting and ending with $1$. Therefore, it follows from Proposition~\ref{prop:norm_ineq} that the norm of $\hat K s_j$ is at most $1$.

\begin{proof}[Proof of Lemma~\ref{lem:sum:binary}]
Let $\varepsilon_i$ ($i= 1, \ldots, n$) be the standard basis of $\mathbb{R}^n$.

Then, since 
\[
s_{ij} = {}^t\varepsilon_{i}s_{j} = \frac{1}{2} {}^t\varepsilon_{i}{}^tC{}^tBe_j,
\]
we obtain
\[
\sum_{i=1}^{k}s_{ij} = \frac{1}{2}{}^t\biggl(\sum_{i=1}^{k}C\varepsilon_i\biggr){}^tBe_j.
\]

By the definition of $B$ and $C$, we easily see that
\[
\sum_{i=1}^{k}C\varepsilon_i=
\begin{cases}
\varepsilon_1 - \varepsilon_{k+1} & (1\le k< n),\\
2\varepsilon_1  & (k=n)
\end{cases}
\]
and
\[
{}^tBe_j = \varepsilon_1 \pm \varepsilon_2 \pm \cdots\pm \varepsilon_n.
\]
The verification of the condition (d) now follows immediately.
\end{proof}


\begin{thebibliography}{9}
\bibitem{AK20} N.~Andrejic, R.~Kunjwal, Joint measurability structures realizable with qubit measurements: Incompatibility via marginal surgery, Phys. Rev. Res., \textbf{2}, 043147 (2020). 
\bibitem{B86} P.~Busch, Unsharp reality and joint measurements for spin observables, Phys. Rev. D \textbf{33}, 2253--61 (1986). 
\bibitem{GU25} D.~Grinko, R.~Uola, Compatibility of Binary Qubit Measurements, Phys. Rev. Let. \textbf{135} 200201 (2025).
\bibitem{R70} R.~T.~Rockafellar, Convex analysis, Princeton University Press (1970). 
\end{thebibliography}
\end{document}